\documentclass[pra,twocolumn,showpacs,superscriptaddress]{revtex4-2} 
\pdfoutput=1

\usepackage{graphicx}

%%math package
\usepackage{physics}
\usepackage{bbold}
\usepackage{amsmath, amsthm, amssymb,mathrsfs}
\newtheorem{prop}{Proposition}

\usepackage{mathtools}

\usepackage[normalem]{ulem}
\usepackage{optidef}
\usepackage{algorithmic}
\usepackage{algorithm}

%%package for the text
\usepackage{comment}
\usepackage{xcolor}
\usepackage{vwcol}

%%acronym
\usepackage[acronym]{glossaries}
\makeglossaries
\newacronym{qkd}{QKD}{quantum key distribution}
\newacronym{di}{DI}{device-independent}
\newacronym{pm}{PM}{prepare-and-measure}
\newacronym{sdp}{SDP}{semidefinite programming}
\newacronym{POVM}{POVM}{Positive Operator Valued Measure}
\newacronym{usd}{USD}{Unambiguous state discrimination}
\newacronym{qber}{QBER}{Quantum Bit Error Rate}
\newcommand{\id}{\mathbb{1}}
%%figures
% \usepackage{float}
% \usepackage[caption=false]{subfig}

%%table
%\usepackage{color, colortbl}
\usepackage{multirow}
\usepackage{diagbox}

%%references

\usepackage[normalem]{ulem}

\usepackage[colorlinks]{hyperref}

\begin{document}

\title{Receiver-Device-Independent Quantum Key Distribution Protocols}

\author{Marie Ioannou}
\affiliation{Department of Applied Physics University of Geneva, 1211 Geneva, Switzerland}
\author{Pavel Sekatski}
\affiliation{Department of Applied Physics University of Geneva, 1211 Geneva, Switzerland}
\author{Alastair A.\ Abbott}
\affiliation{Department of Applied Physics University of Geneva, 1211 Geneva, Switzerland}
\affiliation{Univ.\ Grenoble Alpes, Inria, 38000 Grenoble, France}
\author{Denis Rosset}
\affiliation{Department of Applied Physics University of Geneva, 1211 Geneva, Switzerland}
\author{Jean-Daniel Bancal}
\affiliation{Department of Applied Physics University of Geneva, 1211 Geneva, Switzerland}
\affiliation{Université Paris-Saclay, CEA, CNRS, Institut de physique théorique, 91191, Gif-sur-Yvette, France}
\author{Nicolas Brunner}
\affiliation{Department of Applied Physics University of Geneva, 1211 Geneva, Switzerland}

\begin{abstract}
    We discuss quantum key distribution protocols and their security analysis, considering a receiver-device-independent (RDI) model. The sender's (Alice's) device is partially characterized, in the sense that we assume bounds on the overlaps of the prepared quantum states. The receiver's (Bob's) device requires no characterisation and can be represented as a black-box. Our protocols are therefore robust to any attack on Bob, such as blinding attacks. In particular, we show that a secret key can be established even when the quantum channel has arbitrarily low transmission by considering RDI protocols exploiting sufficiently many states. Finally, we discuss how the hypothesis of bounded overlaps can be naturally applied to practical devices.
\end{abstract}

\maketitle

\section{Introduction}

Quantum key distribution (QKD) \cite{Bennett1984,Ekert1991} allows two users to establish a secret key via a quantum channel and an authenticated but public classical channel. QKD, together with the one time pad method, provides a secure method of communication with information-theoretical security~\cite{Vernam1926}. Indeed, unlike classical schemes, the security of QKD protocols is physical: it only relies on some knowledge about the functioning of the devices controlled by the communicating parties and the general laws of quantum mechanics.  Nevertheless, different approaches require different levels of detail in how the devices are modeled \cite{Scarani2009,Lo2014,Diamanti2016,Xu2020}. The ``standard'' approach presumes a full description of different elements in the setup. Such QKD systems are available commercially and can reach high rates over long distances.

However, relying on a detailed quantum model for characterizing the devices may open backdoors that quantum hackers can exploit. Indeed a mathematical model always represents (at best) an idealization of a practical device. For example, the well-known ``blinding attacks'' exploit the fact that standard models for describing photon detectors typically fail when the intensity of the incoming light falls outside their working range \cite{Lydersen2010,Gerhardt2011}. When a fair-sampling type assumption is used on top of this, the door is open to attacks where an eavesdropper Eve obtains full information about the key, without introducing any detectable level of errors (fair sampling assumes that the occurrence of no-detection events is independent of the choice of the measurement setting \cite{Pearle,berryFS,OrsucciFS}).

This motivates the investigation of the stronger, device-independent (DI) approach. Here, devices are viewed as classically controlled black boxes, and the security of QKD protocols can be demonstrated \cite{Acin2007,Pironio2009,Vazirani,Rotem2018} assuming only that (i) the devices can be described accurately within quantum mechanics, and (ii) no information about the secret key leaks out of the laboratories of Alice and Bob (the two communicating parties). While this approach represents, in principle, the perfect solution to counter any hacking attack, its practical implementation is highly challenging, requiring the distribution of high-quality entanglement and notably high detection efficiencies (the best current protocol demands $68.5\%$ \cite{liu2021}). First proof-of-principle experiments have recently been reported \cite{nadlinger2021,zhang2021,liu2021}, but any practical implementation of DI QKD is arguably still far out of reach.

Beyond the standard (device-dependent) approach and the DI one, there exists a broad range of models that can be considered, where some of the devices are fully (or partially) characterized, while others are treated as black boxes. The most well-known is arguably the measurement-DI (MDI) approach \cite{Lo2012,Braunstein12}, which has been extensively studied and realized experimentally achieving record distances (see, e.g., \cite{Yin2016,Pittaluga2021,Chen2021511,Chen2021658}).

In parallel, another approach has been investigated, considering an asymmetric scenario where one of the end parties is trusted, while the other one is fully untrusted. Referred to as ``one-sided DI'', this model was first proposed in Ref.~\cite{Tomamichel2011}. Establishing a connection to quantum steering, Ref.~\cite{Branciard2012} then investigated the practical limitations of such a protocol, in particular the resilience to noise and losses. Unfortunately, an implementation turns out to be challenging, as the requirements in terms of detection efficiencies ($>65.9 \%$) are only slightly relaxed compared to the full DI model. Other works \cite{Tomamichel2012,Tomamichel2017}, following up on Ref.~\cite{Tomamichel2011}, discussed the implementation in a prepare-and-measure scenario. While considering the effect of noise and finite-size data, these works do however not take into account the effect of losses. Instead, a fair-sampling type assumption is made, which opens the door to blinding attacks, as in standard protocols; see, e.g., \cite{Acin2016}. Hence these results cannot be applied to a practical QKD setup (where losses are unavoidable) without sacrificing the one-sided DI security. Finally, another approach, termed semi-DI \cite{Pawlowski2011,WP2015,Goh2016}, considered a prepare-and-measure scenario assuming an upper bound on the dimension of the prepared quantum systems. Again, these protocols are unpractical, requiring detection efficiencies comparable to the full DI model \cite{dallarno2015}.

In this work we present QKD protocols that achieve one-sided DI security and that are amenable to a practical prepare-and-measure implementation. We refer to these protocols as being ``receiver-device-independent'' (RDI). A specific example of such a protocol was recently presented, along with an experimental realisation, in the companion paper~\cite{Ioannou2021}.
Here, we present a more general class of RDI-QKD protocols and provide a detailed theoretical analysis, investigating the possibilities and limits of QKD in RDI scenarios.

We thereby consider a prepare-and-measure scenario, where the sender (Alice) uses a partially characterized device, while the receiver (Bob) uses an untrusted device. The protocol being black-box on Bob's side, it is therefore inherently secure against attacks on the receiver, notably blinding attacks \cite{Lydersen2010,Gerhardt2011}. On Alice's side, the characterisation we require consists in providing bounds on the (complex) overlaps of the prepared states (given formally by a Gram matrix).
We moreover discuss how this hypothesis can be naturally applied to practical devices.

In practice, the RDI scenario can be quite naturally motivated. Consider for instance a large company communicating with an end-user. The latter has essentially no means to test their cryptographic device, which is therefore conveniently treated as a black-box. On the other hand, the company has access to advanced technology and technical expertise, and can therefore regularly test and characterize their cryptographic device. We note that the MDI approach is not applicable to this scenario, as both Alice and Bob require a trusted device (while trust is then relaxed on an intermediate relay station).

The paper is organized as follows. 
In Section~\ref{sec:scenario} we present the scenario of RDI QKD and discuss the key assumptions that are made, before outlining the RDI-QKD themselves in Section~\ref{sec:protocols}. 
In Section~\ref{sec:securityAnalysis} we present a detailed security analysis. In the noiseless case we present an analytical security proof, showing that our protocols can achieve the maximal distance possible in an RDI scenario. Specifically, we show that it is possible to obtain a positive key rate for any transmission $\eta>1/n$, where $n$ denotes the states prepared by Alice, and corresponds also to the number of measurements performed by Bob. Our protocols can therefore accomodate any amount of losses in principle (by considering sufficiently many states), and are optimal in terms of robustness to losses, as no secret key can be obtained when $\eta\leq1/n$ \cite{Acin2016}. When noise is present, the security analysis relies on semidefinite programming, for which we adapt the method introduced in Ref. \cite{wang_characterising_2019}, providing lower bounds on the key rate. Then, in Section~\ref{sec: noisy states}, we discuss the practical relevance of our RDI approach, in particular how bounds on the overlaps (Gram matrix) can be estimated and justified in practice. Finally, in Section~\ref{sec:comparison} we discuss how our protocol compares to other QKD protocols and scenarios. 

\section{Scenario}
\label{sec:scenario}

\begin{figure}[b!]
    \centering
    \includegraphics[width=0.4\textwidth]{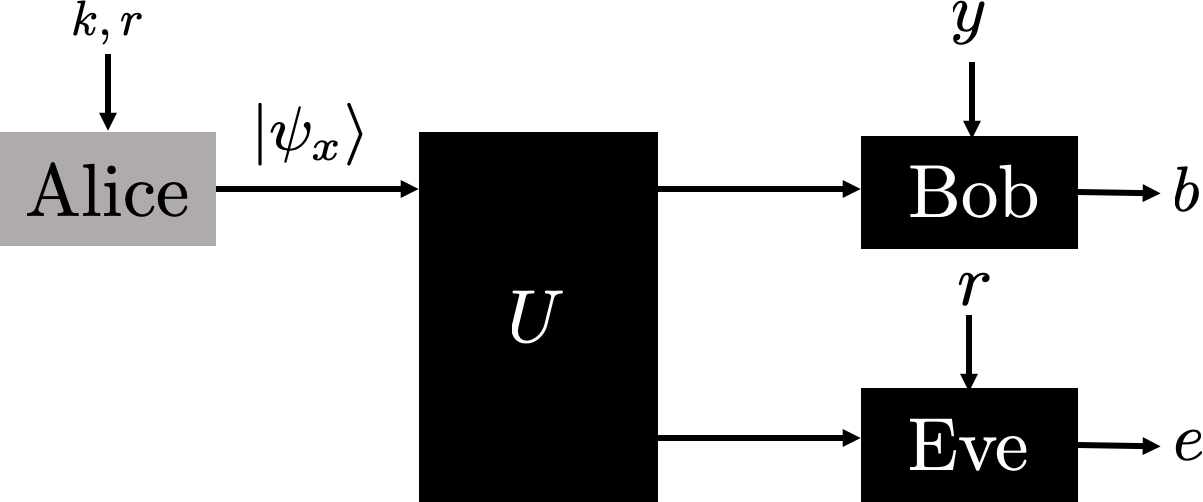}
    \caption{Scenario: Alice and Bob can establish secret key based on the Gram matrix $G$ of the set of states $\{\ket{\psi_x}\}_x$ prepared by Alice and the observed data $p(b|x,y)$. Eve has a complete control on the quantum channel, and can also have full knowledge of the functioning of the devices of Alice and Bob.}
    \label{fig:scheme}
\end{figure}

We consider a prepare-and-measure scenario as shown in Fig.~\ref{fig:scheme}. Alice sends, over a public quantum channel, one state out of a set of $n$ states $\{\ket{\psi_x}\}_{x=0}^{n-1}$. Bob chooses among $n$ measurements labelled by $y=0,\dots,n-1$. All measurements have binary outputs $b=0,1$. After many rounds, Alice and Bob can estimate the probability distribution $p(b|x,y)$. Bob's measurement device is completely uncharacterized and can be seen as a black box with an input $y$ and an output $b$. The black box feature is a requirement if we aim to design a protocol robust to attacks where Eve controls Bob's device. The key assumption we make on the setup is about Alice's preparations. Namely, we assume that all inner-products $ \gamma_{ij} = \braket{\psi_i}{\psi_j} $ are bounded. These assumptions do not fix the total dimension of the Hilbert space and only partially characterize Alice's device.

The assumption that Alice prepares pure states with known inner-products $\gamma_{ij}$ simplifies the presentation and analysis of the protocol, but is evidently impossible to fulfil exactly in practice.
In Sec.~\ref{sec: noisy states} we revisit this assumption on Alice's preparation device and show how the presence of noise, unavoidable in experiments, can also be analyzed within our framework in several ways. In particular, we show that the general situation where the preparation device is subject to fluctuating noise, which remains within a certain parameter window, can be analyzed by taking inequality constraints on (the real and imaginary parts of) the values $\gamma_{ij}$.

Besides the assumption on Alice's preparation device, specific to our protocol, we also make the standard QKD assumptions, also made in the DI scenario: (i) Alice's input $x$ and Bob's measurement setting $y$ are completely uncorrelated from Eve; (ii) Eve only has access to the classical and quantum communication specified by the protocol, she cannot gather any additional information about $x$ and $y$; (iii) We assume the validity of quantum physics. In the following, Eve is restricted to collective attacks. She interacts with each round independently and can store her system in a quantum memory. 

As we will see, a lower bound on the raw secret key rate, can be computed solely from the observed statistics $p(b|x,y)$, given that the setup satisfies the assumptions detailed above. For some ideal cases (no noise), we derive analytical bounds. More generally, e.g. in the presence of noise, we obtain bounds via semi-definite programming (SDP) adapting the methods introduced in Ref. \cite{wang_characterising_2019}.

\section{Protocols}
\label{sec:protocols}

In this section we describe the general structure of the RDI-QKD protocols we consider and give a family of concrete examples. 

\subsection{General structure}

We begin by presenting the general structure of our RDI-QKD protocols. 

Consider a given ensemble of states $\{\ket{\psi_x}\}_{x=0}^{n-1}$ that Alice is able to prepare and binary measurements $\{B_{0|y},B_{1|y}\}_{y=0}^{n-1}$ that Bob can perform. We can now define protocols with a general structure as follows, where the steps 1 and 2 are repeated sufficiently many times in order to guarantee a final key of desired length.

\begin{algorithm}[H]
\floatname{algorithm}{RDI-protocol}
\renewcommand{\thealgorithm}{}
\caption{Steps to generate a secret key between Alice and Bob.}
\label{protocol1}
Alice and Bob share an authenticated classical channel as well as a quantum channel. Steps 1 and 2 are repeated sufficiently many times, before proceeding to steps 3 and 4.
    \begin{enumerate}
        \item \textbf{Raw key generation}
            \begin{algorithmic}[1]
                \STATE Alice randomly chooses a pair of integers ${\mathbf{r}}=(r_0,r_1)$ with $0\leq r_0 <r_1\leq n-1$ and a bit $k=0,1$. According to her choice she sends the state $\ket{\psi_{x=r_k}}$ over the quantum channel to Bob.
                \STATE Bob randomly chooses an integer $y$ with $0\leq y\leq n-1$ and performs the binary measurement $\{B_{0|y},B_{1|y}\}$ on the state received from Alice. 
            \end{algorithmic}
        \item \textbf{Sifting} \\Alice and Bob use the classical channel to communicate.
            \begin{algorithmic}[1]
                \IF{$b=1$}
                    \STATE{Bob tells Alice to discard the round.}
                \ELSIF{$b=0$}
                    \STATE{Bob asks Alice to reveal $\mathbf{r}$.}
                    \STATE{Alice reveals $\mathbf{r}$.}
                        \IF{$y=r_0$ \OR $y=r_1$}
                            \STATE{Bob tells Alice the round is conclusive.}
                        \ELSE
                            \STATE{Bob tells Alice to discard the round.}
                        \ENDIF    
                \ENDIF
            \end{algorithmic}
           
             \item \textbf{Parameter estimation}
            \item \textbf{Error correction and privacy amplification}
    \end{enumerate}
\end{algorithm} 

This structure defines a broad class of protocols specified by the choices of $n$, the states $\{\ket{\psi_x}\}_{x=0}^{n-1}$, and the measurements $\{B_{0|y},B_{1|y}\}_{y=0}^{n-1}$.
In general, the idea is to choose states and measurements such that, in Step 2.7, Bob can readily infer from the observed outcome $b$ what the key bit $k$ of Alice is. 
Below we will describe in more detail some specific examples, which will clarify the principles behind the RDI protocols we describe.

Here, we are not going to describe the classical steps 3 and 4 in detail, as under the assumption of collective attacks these steps can be performed with standard techniques. In  step $3$ Alice and Bob reveal their registers for a subset of rounds chosen at random, allowing them to estimate the probability $p(b|x,y)$.  In step 4, Alice and Bob perform standard one-way error correction followed by privacy amplification protocols, enabling them to extract the final secret key from the raw key available after step 2. Detail of such protocols can be found in the reviews of Refs.~\cite{Scarani2009,Xu2020}. For the security analysis presented in the next section we thus focus on the raw key, under the assumption of collective attacks and a known probability distribution $p(b|x,y)$. The security analysis under coherent attacks is left for future work.

\subsection{Ideal qubit protocol}
\label{subseq:concreteex}

We describe a class of protocols based on qubit states and measurements. As we will see later, these protocols can be considered ideal in the sense of being optimal from the point of view of robustness to loss. At this point, however, we present the protocol in the case of no loss and no noise.

Alice prepares states from a set of $n$ single-qubit states $\{\ket{\psi_x}\}_{x=0}^{n-1}$ with
\begin{equation}\label{eq:qubitStates}
\ket{\psi_x}=\cos(\theta/2)\ket{0}+e^{\frac{i 2\pi}{n} x}\sin(\theta/2)\ket{1}
\end{equation}
for some given $\theta$.
Following the general protocol outlined above, to encode the raw key bit Alice chooses a pair of integers $\mathbf{r}=(r_0,r_1)$ with $0\leq r_0 < r_1 \leq n-1$, among ${n \choose 2}$ possible pairs. For a key bit $k$, Alice sets $x=r_k$. Note that every state $x$ can encode the bit value $0$ or $1$. Alice sends $\ket{\psi_{x=r_k}}$ via the quantum channel to Bob. Bob has $y=0,\dots,n-1$ measurements and each measurement has a binary output $b=0,1$. The output $b=1$ corresponds to a projection onto $\ket{\psi_y}$ while $b=0$ corresponds to the projection on the orthogonal subspace $\id -\ketbra{\psi_y}{\psi_y}$. If Bob observes $b=0$, he can with certainty exclude the state $x=y$. We refer to the rounds where $b=0$ as conclusive rounds. If the round is conclusive, Bob asks Alice to reveal $\mathbf{r}$. If $y=r_0$ or $y=r_1$, Bob is able to infer the raw key bit and announces to Alice that the round is successful; otherwise he tells Alice to discard the round.

The security analysis of this protocol in the presence of noise and loss, is described below in Section~\ref{sec:qubitSecAnal}. 
Moreover, in Section~\ref{sec:analyticBounds} we show that this protocol is optimal within RDI-QKD protocols in the sense that it yields a positive key rate for any $\eta>\frac{1}{n}$, arbitrarily close to the threshold of $1/n$ beyond which no secret key can be established \cite{Acin2016}.

\subsection{Towards practical protocols}

While the above ideal qubit protocol is useful to test the limits of model, the RDI approach can also be used quite naturally, and give good protocols, in more realistic setups. 

Firstly, the requirement that Alice prepares pure states is not necessary.
Indeed, the case of mixed states can naturally be encompassed by considering purifications of the states Alice prepares.
We discuss how to take into account the overlap assumption on Alice's device in this case in Section~\ref{subseq:mixedstates}.

Secondly, the qubit protocol described above can be adapted quite naturally to an optical setup, where a dimension bound on the states Alice prepares is unrealistic. This is because only the overlaps of the prepared states is required (their Gram matrix), but not their Hilbert space dimension. One can therefore consider a protocol where polarized coherent states of light are prepared, as reported recently in the companion paper~\cite{Ioannou2021}. Therein a proof-of-principle implementation of such a  protocol was reported, achieving finite-size key over a 4.8km optical fiber.

\section{Security analysis} 
\label{sec:securityAnalysis}

Eve's information about the secret bit $k$ is bounded by assuming that the Gram matrix $G$ of the set of encoding states is fully characterized and that the probabilities $p(b|x,y)$ are perfectly estimated by Alice and Bob. The Gram matrix $G$ is a Hermitian matrix whose entries are given by 
\begin{equation}
    G_{ij} = \braket{\psi_i}{\psi_j}.
\end{equation}
We do not bound the dimension of the Hilbert space associated to the system sent by Alice. However, under the assumption that Alice prepares pure states the rank of the Gram matrix equals the dimension of the subspace spanned by these states. Recall that this assumption is not indispensable for our analysis, and will be relaxed in Sec.~\ref{sec: noisy states}. Furthermore, no characterization of the exact encoding, transmission channel nor measurement device is needed. Eve can correlate herself to the states prepared by Alice, she can design Bob's measurement device by the means of an ancilla and a unitary operation, and she can use a quantum memory to keep her ancilla until the end of the classical post-processing (cf.\ Fig.~\ref{fig:scheme}). In fact, she can keep her ancilla until any later time and wait until the reconciliation between Alice and Bob is over in order to perform a measurement allowing her to extract as much information as possible about the secret bit $k$. 

The asymptotic key rate (per round) is lower bounded by \cite{Devetak2005} 
\begin{equation}\label{eq: keyrate 1}
\left[H(k|\text{Eve, succ}) - H(k| \text{Bob, succ})\right]p(\text{succ}),
\end{equation}
where $H(k|\text{Eve(Bob), succ})$ is the entropy of $k$ conditional on Eve(Bob) and the fact that a round is not discarded, and $p$(succ) is the probability that a round is not discarded. Bob's entropy can be upper-bounded as $H(k| \text{Bob},\text{succ})\leq h_2(\text{QBER})$, where $h_2(\cdot)$ is the binary entropy and QBER is the quantum bit error rate. Eve's conditional entropy can be lower-bounded by the conditional min entropy 
\begin{equation}\nonumber \begin{split}
    H(k|\text{Eve},\text{succ}) &\geq H_{\min}(k|\text{Eve},\text{succ}) \\& =-\log_2\left(p_g(e=k|\text{succ})\right),
\end{split}
\end{equation}
which is in a one-to-one relation with the maximal probability $p_g(e=k|\text{succ})$ that Eve guesses the bit $k$ correctly~\cite{konig2009operational} if the round was not discarded. Combing the two arguments, we can lower bound the key rate by the quantity
\begin{equation}\label{eq: keyrate}
R = \left[-\log_2\left(p_g(e=k|\text{succ})\right) - h_2(\text{QBER})\right]p(\text{succ}).
\end{equation}

The QBER and $p(\text{succ})$ are extracted from the observed statistics $p(b|x,y)$ while the guessing probability $p_g(e=x|\text{succ})$ needs to be upper bounded in order to give a lower bound on $R$. Note that $p(\text{succ})>0$: if $p(\text{succ})=0$ there is no raw key generation and hence nothing for Eve to guess. The guessing probability is given by
\begin{widetext}
    \begin{equation}
        \begin{aligned}
            p_g(e=k|\text{succ}) &= \frac{p(e=k,\text{succ})}{p(\text{succ})} \\
            &=\frac{\sum_{r=0}^{{n \choose 2}-1} p_R(r) \sum_{k=0}^1 p_K(k) \sum_{y=0}^{n-1 } p_Y(y) \trace{(\rho_{r_k}^{BE}M_{1|y} E_{k|r}) (\delta_{y,r_0}+\delta_{y,r_1})}}{\sum_{r=0}^{{n \choose 2}-1} p_R(r)\sum_{k=0}^1 p_K(k) \sum_{y=0}^{n-1} p_Y(y) \trace{(\rho_{r_k}^{BE}M_{1|y}\mathbb{1}) (\delta_{y,r_0}+\delta_{y,r_1})}},
        \label{eq:p_g}
        \end{aligned}
    \end{equation}
\end{widetext}
where $M_{b|y}$ are Bob's measurement operators with $b=0,1$ and $y=0,\dots,n-1$, and $E_{k|r}$ are Eve's measurement operators with $k=0,1$ and $r=0,\dots,{n\choose 2}-1$. $p_R(r)$, $p_Y(y)$ and $p_K(k)$ are the probabilities of choosing the inputs $r$, $y$ and $k$. Hence, $\sum_r p_R(r)=\sum_kp_K(k)=\sum_yp_Y(y) = 1$, $p_K(k)\geq0$ $\forall k$, $p_Y(y)\geq0$ $\forall y$ and $p_R(r)\geq0$ $\forall r$. Here we will always we take the input probabilities to be uniformly random over all inputs. As already mentioned, the dimension of the problem is not bounded, so without loss of generality we can, using Naimark's dilation theorem, assume that Bob's and Eve's measurements are projectors satisfying the following properties:
\begin{equation}
    \begin{aligned}
        &M_{b|y}M_{b'|y} = \delta_{b,b'}M_{b|y} \;\; &\forall y \\
        &\sum_b M_{b|y} = \mathbb{1} \;\;&\forall y \\
        &E_{e|\mu}E_{e'|\mu} = \delta_{e,e'}E_{e|\mu} \;\; &\forall \mu \\
        &\sum_e E_{e|\mu} = \mathbb{1} \;\; &\forall \mu\\
        &[M_{b|y},E_{e|\mu}]=0 \;\; &\forall b,e,y,\mu.
    \end{aligned}
    \label{eq: propop}
\end{equation}
The last property comes from the fact that Bob and Eve act on two different Hilbert spaces. Note that we do not perform any fair-sampling type assumption on Bob's measurement. The cases were no clicks are recorded at Bob will be included in one of the outputs $b$; see Section IV.B.

\subsection{Semidefinite programming approach}

Since $p(\text{succ})$ is extracted from the observed statistics, to upper bound $p_g(e=k|\text{succ})$ we need just to upper bound $p(e=k,\text{succ})$.
To do this, we will use the method presented in \cite{wang_characterising_2019}. 
In particular, we use the approach described therein which provides a semidefinite programming (SDP) hierarchy giving increasingly tight outer approximations of the set of quantum correlations in discrete prepare-and-measure scenarios compatible with a given Gram matrix. The hierarchy is known to converge to the actual set of quantum correlations, whereas for a fixed level it provides a tractable method of bounding the guessing probability over correlations compatible with the observed statistics.
This problem would, without the hierarchy, be computationally intractable since no bound on the Hilbert space dimension is assumed.

Let $\{S_i\}_{i=0}^{s-1}$ be a set of measurement operators and define the moment matrix $\Gamma$ of size $n s\times n s$ as
\begin{equation}
    \Gamma = \sum_{x,x'=0}^{n-1} \Gamma_{xx'} \otimes \ketbra{\hat{e}_x}{\hat{e}_{x'}},
\end{equation}
where $\{\ket{\hat{e}_x}\}_{x=0}^{n-1}$ is an orthonormal basis of $\mathbb{R}^n$ and we recall that $n$ is the number of states prepared by Alice. The sub-blocks $\Gamma_{xx'}$ are defined as 
\begin{equation}
    \Gamma_{xx'} = \sum_{i,j=0}^{s-1} \bra{\psi_x}S_i^\dagger S_j \ket{\psi_x'} \otimes \ketbra*{\hat{\hat{e}}_j}{\hat{\hat{e}}_j}
\end{equation}\\
where $\{\ket*{\hat{\hat{e}}_i}\}_{i=0}^{s-1}$ is an orthonormal basis of $\mathbb{R}^s$.
It is easily shown that the moment matrix $\Gamma$ is positive semidefinite.
The elements of the set $\{S_i\}_{i=0}^{s-1}$ are monomials of the operators $B_{b|y}$ and $E_{e|\mu}$. This set of operators can be chosen arbitrarily but the aim is to have as many linearly independent operators as possible in the moment matrix.
By taking all monomials of measurement operators up to a given order, we can define different levels of the hierarchy. The first two levels are given, e.g., by the two following sets of operators:
 \begin{equation}
     \begin{aligned}
         \mathcal{S}_1 &= \{\mathbb{1},B_{b|y},E_{e|\mu}\},\\
         \mathcal{S}_2 &=  \mathcal{S}_1 \cup \{B_{b|y}B_{b'|y'},E_{e|\mu}E_{e'|\mu'},B_{b|y}E_{e|\mu}\},
     \end{aligned}
 \end{equation} 
and the levels $\mathcal{S}_n$ for $n> 2$ can likewise be defined inductively. Ref.~\cite{wang_characterising_2019} proved that as $n$ goes to infinity (i.e., in the infinite level limit), the hierarchy converges to the set of quantum correlations.

For the sake of clarity, we define $\Gamma_{xx'}^{ST}:=\bra{\psi_x}S^\dagger T \ket{\psi_{x'}}$ with $S,T\in \mathcal{S}$ and $x,x'=0,...,n-1$. The SDP upper bounding $p(e=x,\text{succ})$ is given by
\begin{subequations}
    \begin{align}
        \max_{\Gamma} \;\;&{\frac{1}{(n-1)n^2}\sum_{r=0}^{n \choose 2}\sum_{k=0}^1\sum_{y=0}^{n-1} \Gamma_{r_kr_k}^{B_{0|y}E_{r_k|r}}(\delta_{y,r_0}+\delta_{y,r_1})}\label{eq: sdpobj}\\
        \text{s.t.} \;\;\;&\Gamma_{xx'}^{\mathbb{1}\mathbb{1}}= \braket{\psi_x}{\psi_{x'}} = \gamma_{xx'} \;\;\;\;\;\; \forall  x,x' \label{eq: sdpconsG}\\
        &\Gamma_{xx}^{\mathbb{1}B_{b|y}} = p(b|x,y)  \;\;\;\;\;\;\;\;\;\;\;\;\;\;\forall b,x,y \label{eq: sdpconsp}\\
        &\trace(\Gamma_{xx'} F_{k}) = f_k \;\;\;\;\;\;\;\;\;\;\; k =0,\dots,m, \forall x,x' \label{eq: sdpconsMiguel}\\
        &\Gamma \succeq 0. & \label{eq: sdpconspos}
    \end{align}
\end{subequations}
The overlap constraint between the set of states is enforced by Eq.~\eqref{eq: sdpconsG}. Eq.~\eqref{eq: sdpconsp} enforces the moment matrix $\Gamma$ to be compatible with the observed correlations $p(b|x,y)$. In Eq.~\eqref{eq: sdpconsMiguel} $F_k$ are hermitian matrices and $f_k$ complex coefficients which are defined in order to encode the constraints on Bob's and Eve's operators given by Eq.~\eqref{eq: propop}, as well as the constraints between elements of $\Gamma_{xx'}$ implied by the fact that $\Gamma_{xx'}^{ST}=\Gamma_{xx'}^{S'T'}$ whenever $S^\dagger T = {S'}^\dagger T'$ (cf.\ Prop.~4 of Ref.~\cite{Navascues2008}). 

\subsection{Security analysis of the ideal qubit protocol}
\label{sec:qubitSecAnal}

Here, we will analyze the security of the idealized qubit protocol presented in Section \ref{subseq:concreteex}, including in the presence of loss and noise. We will model noise by the means of a depolarizing channel with parameter $\lambda \in [0,1]$, which replaces the transmitted state with a maximally mixed state with probability $\lambda$ \cite{NC11}. Loss is modeled by a binary erasure channel \cite{NC11} with erasure probability $(1-\eta)$, $\eta\in[0,1]$. Such a model of loss assumes that loss is orthogonal with respect to the encoding, which is typically the case if one considers, e.g., the polarization of photons for the encoding of the secret bit. 

The Gram matrix $G$ corresponding to the set of states \eqref{eq:qubitStates} prepared by Alice is given by
\begin{equation}
    G_{ij} =\cos^2(\theta/2)+e^{i\frac{2\pi(i-j)}{n}}\sin^2(\theta/2)
\label{eq: gramn}
\end{equation}
with $i,j=1,...,n$.
The probability distribution is then given by
\begin{equation}
\begin{split}
    p(b=0|x,y) &= \eta \left( \frac{\lambda}{2}+ (1-\lambda)\sin^2(\theta)\, \sin^2\left(\frac{\pi(x-y)}{n} \right) \right).
\end{split}
\end{equation}

Given the Gram matrix $G$ of \eqref{eq: gramn} and the observed probability distribution one can upper bound the secret key rate as shown previously. Figure~\ref{fig:simulations} shows the raw key rate as a function of the transmission $\eta$ for different QBER's and values of $n$. For each $\eta$ we numerically optimized over $\theta$ to obtain the optimal $R$. We notice that the lower-bound on the key rate goes asymptotically to zero as $\eta \rightarrow 1/n$. This is optimal because at $\eta = 1/n$, Eve can break the security by intercepting the states sent by Alice and forcing Bob's detector according to her outcome and Bob's input (see Section~\ref{sec:analyticBounds}). Therefore, for any prepare-and-measure protocol, the key rate is null for $\eta\leq 1/n$.

Interestingly, B92~\cite{Bennett1992} is a special case of the proposed protocol with $n=2$ and a fixed $\theta = \frac{\pi}{4}$. Under the same assumptions, our protocol outperforms B92 with respect to the transmission and the noise tolerance, see Fig.~\ref{fig:simulationsUsVsWorld}. Also, BB84~\cite{Bennett2014} under the same assumptions is outrun by our protocol with 3 states.

\begin{figure}
    \centering
    \includegraphics[width=0.4\textwidth]{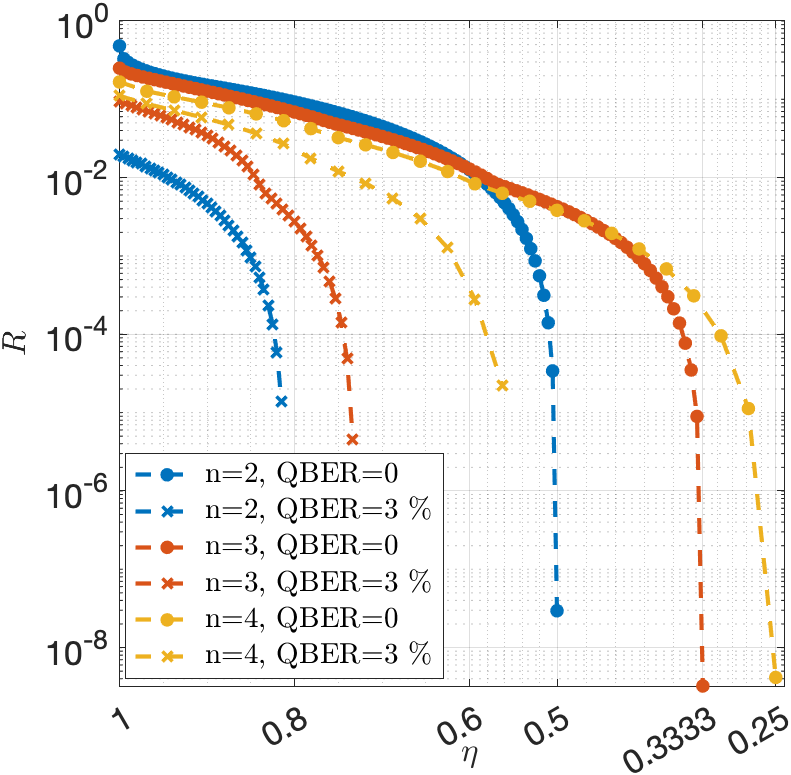}
    \caption{Raw key rate for our RDI-QKD protocol. The graph shows the lower bound on the raw key rate $R$ as a function of the transmission for different number of states and QBER's. For $n$ states, the noiseless protocol has a positive key rate down to $\eta=1/n$, which is the minimal transmission for which this is possible in any prepare-and-measure scenario. The protocol is also tolerant to noise in state preparation.}
    \label{fig:simulations}
\end{figure}

\begin{figure}
    \centering
    \includegraphics[width=0.4\textwidth]{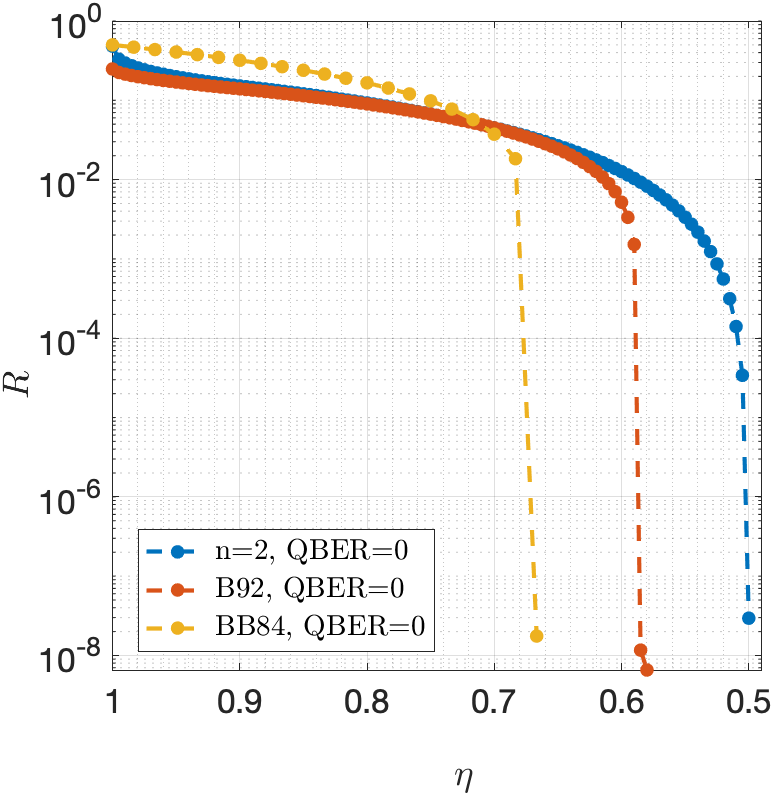}
    \caption{Comparison of our RDI-QKD protocol with other protocols under the same assumptions. The RDI protocol with $n=2$ outperforms BB84 and B92.}
    \label{fig:simulationsUsVsWorld}
\end{figure}

\subsection{Analytical bounds}
\label{sec:analyticBounds}

In this section we prove analytically that, if Alice prepares sufficiently many states, the protocol can in principle tolerate arbitrary small transmission $\eta$. First, with an explicit attack from Eve we lower bound the transmission $\eta$ required to have $R>0$. (Proposition~\ref{prop: ub}). Secondly, we show that this bound is tight as long as $G$ is chosen to obey an additional natural condition (Proposition~\ref{prop: lb}). That is, for any transmission $\eta$ exceeding the threshold, Eve is unable to guess the secret bit with certainty in all rounds, giving rise to a positive key rate.

Transmission loss in the line (scaling with distance) and finite detection efficiency are the bottlenecks in most QKD protocols. Both effects give rise to a loss channel and contribute to the total transmission $\eta$. In this section we assume that this loss is the only imperfection in the setup. This captures the main limiting factor of real QKD setups and allows us to derive relatively simple analytical bounds. 
We assume that loss is orthogonal with respect to the secret bit encoding, such that with probability $\eta$ the system sent by Alice is lost and Bob observes a third outcome (e.g., a no-click event $b=\emptyset$). Bob then attributes it the value $b=1$, such that the rounds where the system sent by Alice is lost are rejected in the protocol.

In this case any protocol with a Gram matrix $G_{ij}=\braket{\psi_i}{\psi_j}$ with $i,j = 0,\dots,n-1$ and the honest measurements ${B}_{1|y} =\ketbra{\psi_y}{\psi_y}$ with ${B}_{0|y}=\mathbb{1}- B_{1|y}$ leads to measurement probabilities
\begin{equation} \label{eq: probas loss only}
\begin{split}
p(b=0|x,y)&= \eta(1-|G_{xy}|^2),\\
p(b=1|x,y)&=1- p(b=0|x,y),
\end{split}
\end{equation}
with $x,y = 0,\dots,n-1$. One notes that with such probabilities $p(0|x=y)=0$: Bob's bits are perfectly correlated to Alice's after the sifting, i.e.\ $h_2(\textrm{QBER})=0$. For the following, we define $\lambda_\text{min}(G)$ as the minimal non-zero eigenvalue of the Gram matrix $G$.
\begin{prop}
\label{prop: ub}
    Given a Gram matrix $G\in\mathbb{C}^{n\times n}$ and measurement probabilities of Eq.~\eqref{eq: probas loss only}, a necessary condition for $R>0$ is that $\eta > \frac{1}{n-\lambda_\mathrm{min}(G)}$.
\end{prop}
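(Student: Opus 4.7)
The plan is to establish the contrapositive: for every $\eta\le 1/(n-\lambda_{\min}(G))$ I will exhibit an explicit quantum attack that reproduces the statistics of Eq.~\eqref{eq: probas loss only} and lets Eve determine $k$ with certainty in every conclusive round. Since the honest $\mathrm{QBER}$ vanishes in the loss-only model (Bob is perfectly correlated with Alice after sifting), this forces $p_g(e{=}k|\text{succ})=1$ and therefore $R=0$ via Eq.~\eqref{eq: keyrate}.

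Concretely, for $\alpha\in(0,\,1/(n-\lambda_{\min}(G))]$ I will introduce the $(n{+}1)$-outcome POVM
\begin{equation*}
E_m = \alpha\bigl(\Pi - \ketbra{\psi_m}{\psi_m}\bigr),\ \ m=0,\dots,n{-}1,\qquad E_? = \id - \sum_{m=0}^{n-1} E_m,
\end{equation*}
where $\Pi$ projects onto $\operatorname{span}\{\ket{\psi_x}\}_{x=0}^{n-1}$. Eve intercepts Alice's state, applies this POVM, stores her outcome $m$, and, using her freedom to design Bob's black box, ensures that on input $y$ together with her classical outcome $m$ the box outputs $b=0$ iff $m=y$, and $b=1$ otherwise. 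Taking $\alpha=\eta$, a direct computation gives
\begin{equation*}
p(b=0|x,y) = \operatorname{tr}\bigl(E_y\ketbra{\psi_x}{\psi_x}\bigr) = \eta\bigl(1-|G_{xy}|^2\bigr),
\end{equation*}
matching Eq.~\eqref{eq: probas loss only} exactly. In every conclusive round ($b=0$, $y\in\{r_0,r_1\}$) the rule $b=0\Leftrightarrow m=y$ reveals $y$ to Eve via her ancilla, while the identity $p(b=0|x{=}y,y)=0$ pins $x$ down as the unique element of $\mathbf{r}$ different from $y$; given the publicly announced $\mathbf{r}$ Eve then outputs $k$ deterministically.

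The substantive step is validating the POVM, since this is where the sharp constant $1/(n-\lambda_{\min}(G))$ arises. Writing $\sum_m\ketbra{\psi_m}{\psi_m}=VV^\dagger$ with $V^\dagger V=G$, the two operators share the same non-zero spectrum, so on $\operatorname{range}(\Pi)$ the operator $n\Pi-\sum_m\ketbra{\psi_m}{\psi_m}$ has eigenvalues $\{n-\lambda_i(G)\}$ and maximum $n-\lambda_{\min}(G)$. Hence $\sum_m E_m \le \id$ iff $\alpha\le 1/(n-\lambda_{\min}(G))$, which in turn guarantees $E_?\ge 0$ (trivially so on the orthogonal complement of $\Pi$).

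The main obstacle is precisely this spectral step: one must work on $\operatorname{range}(\Pi)$ and use the convention that $\lambda_{\min}(G)$ denotes the smallest non-zero eigenvalue, which is essential when $G$ is rank-deficient (e.g.\ the symmetric qubit case of Sec.~\ref{sec:qubitSecAnal}). Once the POVM is validated, the matching of the statistics and Eve's deterministic inference of $k$ from $(m,\mathbf{r})$ are routine, and plugging $H_{\min}(k|\mathrm{Eve},\mathrm{succ})=0$ and $\mathrm{QBER}=0$ into Eq.~\eqref{eq: keyrate} closes the argument.
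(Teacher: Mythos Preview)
Your proof is correct and follows essentially the same strategy as the paper: both use the unambiguous state-exclusion POVM $E_m\propto(\Pi-\ketbra{\psi_m}{\psi_m})$, identify the threshold from its validity constraint via the spectrum of $\sum_m\ketbra{\psi_m}{\psi_m}$, and have Bob's black box output $b=0$ only when Eve's exclusion outcome matches $y$. The differences are cosmetic---you set $\alpha=\eta$ and always intercept, whereas the paper maximizes $\mu$ and then mixes the attack with the honest strategy via a probability $q$ (which also yields a guessing-probability bound above threshold, but is unnecessary for the proposition itself), and you make the restriction to $\operatorname{span}\{\ket{\psi_x}\}$ via $\Pi$ explicit where the paper leaves it implicit.
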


\begin{proof}
    Let us assume that with probability $q$ Eve intercepts the state sent by Alice and makes an unambiguous state exclusion measurement $M_i = \mu(\mathbb{1}-\ketbra{\psi_i}{\psi_i})$ with $i=0,\dots,n-1$, $\mu\in[0,1]$ 
    and $M_{n} = \mathbb{1} - \sum_{i=0}^{n-1} M_i$. 
    
    If Eve obtains an outcome $i<n$, she can exclude with certainty the state $\ket{\psi_i}$, whereas if she gets the outcome $n$ she cannot conclude anything. In order to have as many conclusive outcomes as possible Eve maximizes $\mu$ under the constraint $M_{n}\geq0$: 
    \begin{equation}
        \begin{aligned}
            \max_\mu \;\;\;&\mu\\
            \text{s.t.}\;\;\; &\mathbb{1}\frac{(n\mu-1)}{\mu} \leq \sum_{i=0}^{n-1} \ketbra{\psi_i}{\psi_i},\\
            &\mu \geq 0.
        \end{aligned}
        \label{eq: maxmu}
    \end{equation}   
    The first constraint in Eq.~\eqref{eq: maxmu} is satisfied if the eigenvalues of $\sum_{i=0}^{n-1} \ketbra{\psi_i}{\psi_i}$ are all larger than $\frac{(n\mu-1)}{\mu}$. 
    But the eigenvalues of $\sum_i \ketbra{\psi_i}{\psi_i}$ coincide with the nonzero eigenvalues of the Gram matrix $G$. Hence, the above maximization is satisfied if $\frac{(n\mu-1)}{\mu}\leq \lambda_\text{min}(G)$. This leads to an optimal $\mu^*=\frac{1}{n-\lambda_\text{min}(G)}$ and $p(i|x)= \mu^*(1-|G_{xi}|^2)$. The result $i$ of Eve's measurement is then sent to Bob's detector which only outputs $b=0$ if $y=i$, i.e.\ $p(b=0|y,i)=\delta_{y,i}$. The resulting probability observed by Bob is 
    \begin{equation}\begin{split}
    p(b=0|x,y) &=\sum_{i=0}^n p(b=0|i,y) p(i|x) \\
    & = \mu^*(1-|G_{xy}|^2).
    \end{split}
    \end{equation}
    With probability $(1-q)$ Eve does not intercept the message, and Bob's detector is instructed to perform the ideal measurement $p(b=0|x,y)=(1-|G_{xy}|^2)$. Eve wants to remain undetected and hence needs to reproduce the expected statistics of Eq.~\eqref{eq: probas loss only}.  Her attack must thus satisfy the equality
    \begin{equation}
    \label{eq: probAnalBound}
        \eta(1-|G_{xy}|^2)=q\mu^*\big(1-|G_{xy}|^2) + (1-q)(1-|G_{xy}|^2)\big)
    \end{equation}
    for all $x,y$. This implies that Eve can not intercept the message more often than in a fraction $q=\frac{1-\eta}{1-\mu^*}$ of rounds. In particular, if $q=\frac{1-\eta}{1-\mu^*}\geq 1$ or $\eta \leq \frac{1}{n-\lambda_\text{min}(G)}$ she can intercept the message in every round resulting in $p(y=i|\text{succ})=p_g(e=k|\text{succ})=1$ and $R=0$. 
    \end{proof}
    
    More generally, this attack gives a lower bound on Eve's guessing probability as
    \begin{equation}\begin{split}
        p_g(e=k|\text{succ}) &\geq q + (1-q)\frac{1}{2} \\
        &= \frac{1}{2}\left(1+\frac{1-\eta}{\eta(n-(1+\lambda_\text{min}(G))}\right),
    \end{split}
    \end{equation}
    with equality if $p_g(e=k|\text{succ})=\frac{1}{2}$ for the honest implementation at $\eta=1$.

For the considered family of protocols the proposed attack allows Eve to guess the secret bit $k$ of Alice perfectly whenever one has $\eta\leq \frac{1}{n-\lambda_\text{min}(G)}$. The converse question is whether, for any transmission exceeding this value, there exists a protocol (with a given $n$ and $\lambda_\text{min}(G)$) yielding a strictly positive key rate. We will now show that this is indeed the case by considering a qubit protocol with rank$(G)=2$, as discussed in Sec.~\ref{subseq:concreteex}.

\begin{prop}
\label{prop: lb}
Consider a Gram matrix $G$, with $\rank(G)=2$,
leading to measurement probabilities in Eq.~\eqref{eq: probas loss only}. 
If the transmission exceeds  $\eta >\frac{1}{n-\lambda_\mathrm{min}(G)}$, then one can obtain a positive key rate, i.e., $R>0$.
\end{prop}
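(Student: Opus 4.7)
Since the loss-only statistics $p(b=0|x,y)=\eta(1-|G_{xy}|^2)$ vanish for $x=y$, the post-sifting QBER is zero, so $H_2(\mathrm{QBER}) = 0$; also $p(\mathrm{succ}) > 0$ for any $\eta > 0$. Hence, via the Devetak--Winter bound in Eq.~\eqref{eq: keyrate}, proving $R > 0$ reduces to showing $\sup_{\mathrm{Eve}} p_g(e=k|\mathrm{succ}) < 1$ under the hypothesis $\eta > \mu^* := \frac{1}{n-\lambda_\mathrm{min}(G)}$. I would argue this by contrapositive.

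Suppose Eve achieves $p_g = 1$. Then for every announced pair $\mathbf{r} = (r_0, r_1)$, her post-success conditional ancilla states $\rho_{r_0}^{E|\mathrm{succ}}$ and $\rho_{r_1}^{E|\mathrm{succ}}$ must have orthogonal supports. Since this must hold simultaneously for every pair $\mathbf{r}$, all the states $\{\rho_x^{E|\mathrm{succ}}\}_{x=0}^{n-1}$ are pairwise orthogonal. Equivalently, conditioned on a successful round, Eve's ancilla perfectly identifies $x$, which is tantamount to her performing a perfect unambiguous state identification on Alice's ensemble.

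The rate at which such an unambiguous identification can furnish Bob with the observed conclusive statistics $\eta(1-|G_{xy}|^2)$ is bounded by the USE budget derived in Prop.~\ref{prop: ub}: per pair $(x,y)$ the conclusive rate is at most $\mu^*(1-|G_{xy}|^2)$. Consequently, matching the observed rate forces $\eta \leq \mu^*$, contradicting the hypothesis. This contradiction establishes $p_g < 1$, and combined with $H_2(\mathrm{QBER}) = 0$ and $p(\mathrm{succ}) > 0$ yields $R > 0$.

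\textbf{Main obstacle.} The delicate step is translating the abstract condition ``post-success ancilla states pairwise orthogonal'' into a concrete measurement on Alice's transmitted state whose rate is actually bounded by $\mu^*(1-|G_{xy}|^2)$. This is where the $\rank(G) = 2$ hypothesis enters crucially: for qubit ensembles, Eve's strategy can be purified to an isometry acting on a qubit, and its conclusive branch is forced into the USE form covered by Prop.~\ref{prop: ub}. In higher rank, alternative unambiguous strategies are possible and the bound need not be saturating, which is precisely why the proposition is stated only for $\rank(G) = 2$. A fully rigorous alternative is to construct an explicit dual-feasible witness for the SDP of Section~\ref{sec:securityAnalysis} at level $\mathcal{S}_2$, parameterized by $\mu^*$, $\eta$, and the Gram-matrix spectrum, that certifies $p_g \leq (1+q)/2$ with $q = (1-\eta)/(1-\mu^*) < 1$, giving the required strict inequality analytically.
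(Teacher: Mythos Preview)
Your reduction to showing $p_g(e=k|\text{succ})<1$ is fine, and proceeding by contradiction is the right move. The gap is in the central step where you invoke Proposition~\ref{prop: ub}. That proposition constructs a \emph{specific} attack (unambiguous state exclusion with $M_i=\mu(\id-\ketbra{\psi_i}{\psi_i})$) and shows it succeeds for $\eta\le\mu^*$; it does \emph{not} prove that $\mu^*(1-|G_{xy}|^2)$ is an upper bound on the conclusive rate of an arbitrary attack with $p_g=1$. You are using an achievability result as if it were a converse. Relatedly, ``perfect unambiguous state identification on Alice's ensemble'' is not even well-posed here: with $\rank(G)=2$ and $n>2$ the states $\{\ket{\psi_x}\}$ are linearly dependent, so USD is impossible, and your claimed equivalence between ``Eve's ancillae pairwise orthogonal in $x$'' and ``Eve performs USD'' breaks down.

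What the paper actually does is the step you flag as the ``main obstacle'' but do not carry out. The observed statistics $p(0|x,y)=\eta(1-|G_{xy}|^2)$ together with $\rank(G)=2$ force the effective qubit POVM element to be $M_{0|y}=\eta\ketbra{\psi_y^\perp}{\psi_y^\perp}$, i.e.\ rank one. Rank one is the crucial point: it makes the post-$(b=0)$ state on $A'$ pure, so Eve's marginal $\rho_E^{(0|y)}$ is \emph{independent of $x$}. Hence $p_g=1$ forces orthogonality of $\{\rho_E^{(0|y)}\}_y$ in $y$ (not in $x$), and a budget argument on the decomposition $\rho_E=p(0|y)\rho_E^{(0|y)}+(1-p(0|y))\rho_E^{(1|y)}$, applied recursively and then optimised over the input state, yields $\big\lVert\sum_y M_{0|y}\big\rVert\le 1$, i.e.\ $\eta(n-\lambda_{\min}(G))\le 1$. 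This is a genuine converse argument, not an appeal to the attack of Proposition~\ref{prop: ub}; without the rank-one/$x$-independence step your sketch does not close.
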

\begin{proof}
Since in our case $h_2(\textrm{QBER})=0$, from Eq.~\eqref{eq: keyrate} one sees that the condition $R>0$ is equivalent to $p_g(e=k|\text{succ})<1$, that is Eve can not always guess the secret bit with certainty. Thus, we want to prove $p_g(e=k|\text{succ})<1$. To do so we will proceed by assuming $p_g(e=k|\text{succ})=1$ and reach a contradiction. 

To start, it is convenient to replace our prepare-and-measure scenario by an equivalent entanglement-based scenario. Alice prepares an entangled state
\begin{equation}
\ket{\Phi}_{AA'} = \frac{1}{\sqrt{n}}\sum_{x=1}^n  \ket{x}_A \ket{\psi_x}_{A'},
\end{equation}
sends out $A'$ and measures $A$ in the computational basis $\{\ketbra{x}\}_{k=0}^{n-1}$ to obtain $x$. Since the states $\{\ket{\psi_x}\}_x$ span a 2-dimensional space, by the Schmidt theorem the state $\ket{\Phi}_{AA'}\in\mathbb{C}^2\otimes\mathbb{C}^2$ is a two qubit state. 

Without loss of generality an attack performed by Eve starts with an isometry $U$ mapping $A'$ onto systems $B$ and $E$ of arbitrary dimension
\begin{equation}
U: \ket{\Phi}_{AA'} \mapsto \ket{\Psi}_{ABE} = \mathbb{1}_A\otimes U_{A'}\ket{\Phi}_{AA'}.
\end{equation}
In addition Eve chooses a set of binary measurements $\{B_{0|y},B_{1|y}\}$ acting on $B$. 
The combinations of the isometry and the measurements on $B$ define the measurements on the system $A'$ via
\begin{equation}
M_{b|y} = U_{A'}^\dag \left(B_{b|y}\otimes \mathbb{1}_E \right) U_{A'}.
\end{equation}

Furthermore these measurements are constrained to satisfy $\bra{\psi_x} M_{b|y}\ket{\psi_x} = p(b|x,y)$
by the probabilities observed by Alice and Bob in Eq.~\eqref{eq: probas loss only}. From $\bra{\psi_y} M_{0|y}\ket{\psi_y} = 0$, one concludes that $M_{0|y} \propto \id- \ketbra{\psi_x}{\psi_x}$. Any of the remaining probabilities $\bra{\psi_{x\neq y}} M_{0|y}\ket{\psi_{x\neq y}}$ implies
\begin{equation}
M_{0|y} = \eta (\id- \ketbra{\psi_y}{\psi_y})= \eta  \ketbra{\psi_y^\perp}{\psi_y^\perp}.
\end{equation}

This form of $M_{0|y}$ is very restrictive for Eve. In particular, it projects $\ket{\Phi}_{AA'}$ into a product state
\begin{equation}\label{eq: Th2 1}
  \id_A\otimes  \sqrt{M_{0|y}}\ket{\Phi}_{AA'} = \sqrt{p(b=0|y)} \ket{\xi^{(0|y)}}_A \ket{\phi^{(0|y)}}_{A'},
\end{equation}
with $p(b=0|y) = \frac{1}{n}\sum_x p(b=0|x,y)$. This identity can be put in the form
\begin{equation}\label{eq: Th2 2}
\sqrt{B_{0|y}}\otimes \id_{AE} \ket{\Psi}_{ABE} = \sqrt{p(b=0|y)} \ket{\xi^{(0|y)}}_A \ket{\Psi^{(0|y)}}_{BE}.
\end{equation}
From here we can define the marginal state of Alice and Eve conditional to Bob measuring $y$ and obtaining $0$
\begin{equation}\begin{split}
    \rho_{AE|B}^{(0|y)} &= \ketbra{\xi^{(0|y)}}{\xi^{(0|y)}}_A \otimes  \rho_E^{(0|y)} \\
   \textrm{with}\quad &\rho_E^{(0|y)} = \tr_B \ketbra{\Psi^{(0|y)}}{\Psi^{(0|y)}}_{BE}.
\end{split}
\end{equation}
Remarkably, Eve's state is no longer influenced by any manipulations done by Alice, and in particular by her measurement result $x$. That is, conditionally on $y$ Eve's state is independent of $x$. This means that after the sifting Eve can only guess $x$ perfectly ($p_g(e=k|\text{succ})=1$), if she can guess $y$ perfectly. Formally,
\begin{equation}
    p_g(e=k|\text{succ}) = 1 \implies \frac{1}{2} \big\lVert\rho_E^{(0|y)}  -\rho_E^{(0|y')}\big\rVert =0\quad \forall\, y\neq y'.
\end{equation}
Let us now show that this imposes some conditions on the probabilities $p(b=0|y)$. To do so consider the trivial inequality
\begin{equation}
\rho_E  = p(b=0|y) \rho_{E}^{(0|y)} + (1-p(b=0|y)) \rho_{E}^{(1|y)},
\end{equation}
where $\rho_E = \tr_{AB} \ketbra{\Psi}{\Psi}_{ABE}$ and $(1-p(b=0|y)) \rho_{E}^{(1|y)} = \tr_{AB}\big[(B_{1|y}\otimes \id_{AE}) \ketbra{\Psi}{\Psi}_{ABE}\big]$, which implies
\begin{equation}
    \rho_E - p(b=0|y=0) \rho_{E}^{(0|0)}\geq 0.
\end{equation}
But because the state $\rho_{E}^{(0|0)}$ and $\rho_{E}^{(0|1)}$ have orthogonal support, we also obtain
\begin{equation}\label{eq: Th2 3}
    \rho_E - p(b=0|y=0) \rho_{E}^{(0|0)} - p(b=0|y=1) \rho_{E}^{(0|1)}\geq 0.
\end{equation}
By recursion we obtain the bound
\begin{equation}\label{eq: Th2 4}
\begin{split}
      \rho_E - \sum_{y} p(b=0|y)\rho_{E}^{(0|y)} &\geq 0\\
      1 -\sum_y p(b=0|y) &\geq 0 \\
      \sum_y p(b=0|y) &\leq 1
\end{split}
\end{equation}
on the average probability of the $b=0$ outcome. With the help of Eq.~\eqref{eq: probas loss only} this bound can be written as
\begin{equation}
\eta \leq \frac{1}{\frac{1}{n}\sum_{x,y} (1-|G_{xy}|^2)}.
\end{equation}

This bound is, however, worse that the one in the statement of the theorem. Let us now show how to match the two. For this we consider a thought experiment where Alice prepares some pure state
\begin{equation}
\ket*{\widetilde{\Phi}}_{AA'} \in \mathbb{C}^2\otimes\mathbb{C}^2.
\end{equation}
As  $M_{0|y}= \eta \ketbra{\psi_y^{\perp}}{\psi_y^\perp}$ is proportional to  projector on a state, one has, analogously to Eqs.~\eqref{eq: Th2 1}--\eqref{eq: Th2 2},
\begin{equation}
    \left(\sqrt{B_{0|y}} \otimes \id_{AE}\right) U_{A'}\ket{\widetilde{\Phi}}_{AA'} 
\!\! = \sqrt{\widetilde{p}(0|y)}  \ket{\widetilde{\xi}^{(0|y)}}_A \ket{\Psi^{(0|y)}}_{BE},
 \nonumber
\end{equation}
with the same state $\ket{\Psi^{(0|y)}}_{BE}$. Hence the marginal state $\rho_E^{(0|y)}$ are also the same, and satisfy
\begin{equation}
\widetilde{\rho}_E  = \widetilde{p}(b|y) \rho_{E}^{(0|y)} + (1-\widetilde{p}(0|y)) \rho_{E}^{(1|y)}
\end{equation}
for $\widetilde{\rho}_E = \tr_{AB} [U_{A'}\ketbra*{\widetilde{\Phi}}{\widetilde{\Phi}}_{AA'} U_{A'}^\dag]$. We can now repeat the arguments of Eqs.~\eqref{eq: Th2 3}--\eqref{eq: Th2 4} to obtain the bound
\begin{equation}
\sum_y \widetilde{p}(0|y)\leq 1,
\end{equation}
valid for the sum of probabilities
\begin{equation}
\begin{split}
\sum_{y} \widetilde{p}(0|y) &= \tr_{AA'} \Big[\big(\id_A\otimes\sum_y M_{0|y} \big)  \ketbra*{\widetilde{\Phi}}{\widetilde{\Phi}}_{AA'}\Big]\\
     &= \tr_{A'} \Big[\big(\sum_y M_{0|y}\big) \, \rho_{A'}\Big]
\end{split}
\end{equation}
coming from any state $\rho_{A'}$. Choosing the state which  maximizes the bound $\max_{\rho_{A'}} \tr [ \rho_{A'} (\sum_y M_{0|y})] = \lVert\sum_y M_{0|y}\rVert$, one obtains
\begin{equation}\begin{split}
 \Big\lVert\sum_y M_{0|y}\Big\rVert&\leq 1 \\
 \eta \Big\lVert \sum_y (\id -\ketbra{\psi_y}{\psi_y}) \Big\rVert & \leq 1\\
 \eta \big( n - \lambda_\textrm{min}(G)  \big)  & \leq 1 \\
 \eta \leq \frac{1}{ n - \lambda_\textrm{min} (G)},
\end{split}
\end{equation}
where we used the fact that $G$ and $\sum_{y}\ketbra{\psi_y}{\psi_y}$ have the same eigenvalues. Hence, having $p_g(e=k|\text{succ})=1$ and $\eta > \frac{1}{ n - \lambda_\textrm{min} (G)}$ is impossible, which concludes the proof.
\end{proof}

Propositions \ref{prop: ub} and \ref{prop: lb} imply that, for any transmission $\eta$, there exists a RDI-QKD protocol involving $n>\frac{1}{\eta}$ different measurements performed by Bob which yields a positive key rate. In particular, as follows from the proof of Propostion \ref{prop: lb}, this is achieved by the ideal qubit protocol of Sec.~\ref{subseq:concreteex} by choosing $\lambda_{\min}(G)< n -\frac{1}{\eta}$. Conversely, in the RDI setting where Bob can do $n$ different measurements labeled by the settings $y$, Eve can always perform a ``blinding'' attack and obtain a perfect copy of Bob's registers. To do so she performs one of the possible measurements $y'$ at random, records the outcome $e$, and sends a copy of $e$ and $y'$ to Bob's detector. When Bob performs his measurement with a setting $y$, the detector reveals $b=e$ if $y=y'$ and pretends that the system was lost $b=\emptyset$ otherwise. Since $p(y'= y|y)=1/n$, for $\eta<\frac{1}{n}$ Eve is left with a perfect copy of Bob's registers $(b,y)$ whenever the detection is successful $b\neq\emptyset$.

\subsection{Importance of the choice of the Gram matrix} 
\label{sec:Gram_importance}

As we saw in the previous section, if one chooses the $n$ states $\{\ket{\psi_x}\}_x$ well then one can obtain $R>0$, and thus a positive key rate, whenever $\eta > 1/n$.
In this section, we show that it is indeed important to choose the Gram matrix constraining the preparations with some care.
In particular, we show that for a seemingly natural choice of Gram matrix the critical transmission, below which no key can be obtained, is significantly worse: Alice and Bob will not be able to provide a nontrivial lower bound on the key rate if there is more than $50\%$ loss, i.e.\ if $\eta > 1/2$.

We assume thus that Alice prepares a set of $n$ quantum states compatible with the Gram matrix $G_{xx'} = \braket{\psi_x}{\psi_{x'}} = d$ with $d\in(0,1)$ for all $x\neq x'$ and that the observed statistics are given by Eq.~\eqref{eq: probas loss only}. Since $\rank(G) = n$, the states she prepares are necessarily linearly independent. As a result, there exists an unambiguous state discrimination (USD) measurement \cite{C98}. Because of the symmetry, we consider an equiprobable USD and the probability of a conclusive discrimination is given by the smallest eigenvalue of the Gram matrix which is in our case equal to $1-d$ \cite{HEK19}. 

We assume that Eve performs an intercept-resend attack such that with a probability $0\leq q \leq 1$ she performs USD and forces Bob's detection, and with a probability $1-q$ she leaves the state untouched and guesses at random. Given that $x\neq y$, if Eve attacks the USD is conclusive with a probability $1-d$ and if she does not intercept the state Bob gets $b=0$ with a probability $\eta(1-d^2)$. Eve wants her attack to remain unnoticed and this fixes the probability $q$ of intercepting the state to $q = \frac{(1+d)(1-\eta)}{d}$. 

The probability that Eve successfully guesses the secret bit is then given by
\begin{equation}
    \begin{aligned}
        p_g(e=x|\text{succ}) =\frac{1-\eta(1-d)}{2d\eta}.
    \end{aligned}
\end{equation}
Eve thus has entire knowledge of the secret bit string, i.e., $p_g(e=k|\text{succ})=1$, for $\eta=\frac{1}{1+d}> \frac{1}{2}$. Hence, considering identical real overlaps prevents Alice and Bob from obtaining a positive key rate for more than $50\%$ loss.

\section{Bridging the gap between the protocols and practical implementations}
\label{sec: noisy states}

Our receiver-device independent setting assumes the characterization of Alice's state preparation device, given by the Gram matrix
\begin{equation}
\textrm{Alice} \simeq G.
\end{equation}
When Alice prepares pure states, as we have assumed so far, the Gram matrix gives an exhaustive description of the state preparation for our purpose.
That is, in the considered RDI setting, additional information on the states does not help restricting Eve further. In particular, any common unitary transformation or isometry on the states can be cancelled by Eve and does not affect the attacks she can perform. 

In practice the pure state assumption is always an idealization. Here, we discuss how a more realistic model of Alice's setup can be analysed with our protocols.

\subsection{Mixed state models}
\label{subseq:mixedstates}

Here we consider the setting where Alice's preparation device sends out a mixed state $\rho_x$ for each possible value of $x$. That is the preparation box is modeled by a set of mixed states 
\begin{equation}\label{eq: mixed}
\textrm{Alice} \simeq \{\rho_x\}_{x=0}^{n-1}.
\end{equation}

An ensemble of mixed states of a system $A'$ can be jointly purified onto a larger system $A'\otimes A_\text{aux}$ to obtain a set of pure states $\{\ket{\psi_x}\}_{x=0}^{n-1}$ with $\ket{\psi_x} \in \mathcal{H}_{A'}\otimes \mathcal{H}_{A_\text{aux}} $ and 
\begin{equation}\nonumber
\qquad \rho_x = \tr_{A_\text{aux}} \ketbra{\psi_x}{\psi_x}\ \forall \, x.
\end{equation}
Because the system $A_\text{aux}$ remains inside Alice's lab by assumption, any security guarantee obtained for a Gram matrix $G_{\{\ket{\psi_x}\}}$ induced by the set of pure states $\{\ket{\psi_x}\}_{x=0}^{n-1}$ is valid for the original mixed states. In this case one is interested in finding the best-case purification maximizing the key rate. This gives a straightforward way to apply our protocols to noisy preparation devices modeled by Eq.~\eqref{eq: mixed}. The resulting bounds are not necessarily tight, because in the analysis the purifying system $A_\text{aux}$ is given to the eavesdropper, but are secure.

An interesting open question is whether there exists a compressed representation of the mixed state ensemble $\{\rho_x\}_{x=0}^{n-1}$, analogous to the Gram matrix, that specifies all the relations between the states useful for our purpose. Notably, in the case of two states the fidelity between them $F(\rho_0,\rho_1)$ precisely corresponds to the maximal fidelity between their purifications (see e.g.~\cite{nielsen2002quantum}). However, for larger ensembles the knowledge of pairwise fidelities is known to be insufficient to characterize the joint purification~\cite{fannes2011matrices}. As a simple example note that even in the case of three pure states the pairwise fidelities disregard the complex phases of the Gram matrix entries, which can be crucial for the security analysis as we have seen in Section~\ref{sec:Gram_importance}.

\subsection{Fully characterized correlated noise models}

Next, let us consider the general situation where Alice's preparation device is well described by pure states that are however subject to noise, e.g., coming from drifts and fluctuations of some parameters (laser amplitude, phase noise etc). In such a case the preparation box is modelled by a parametric set of states
\begin{equation}\label{eq: parametric proba}
\textrm{Alice} \simeq \{\ket{\phi_x(\lambda)}\}_{x=0}^{n-1},
\end{equation}
where $p(\lambda)$ is the distribution of the noise parameter $\lambda$. In contrast to Eq.~\eqref{eq: mixed}, this model allows for correlated noise affecting the preparation device for all measurement settings. Notably, the model in Eq.~\eqref{eq: parametric proba} reduces to  Eq.~\eqref{eq: mixed} when the hidden variable $\lambda = (\lambda_0,\dots, \lambda_{n-1})$ is composed of random variables $\lambda_x$  that only influence the preparation for the respective setting $x$ and are distributed independently.

Each set of pure states labeled by $\lambda$ corresponds to a Gram matrix $G(\lambda)$. Here, it is important to realize that the correlations $p(b|x,y)$ observed by Alice and Bob do not constrain each $\lambda$ (unless the distribution $p(b|x,y)$ is extremal) but are only respected on average, i.e.
\begin{equation}
p(b|x,y) = \int \dd\lambda\, p(\lambda)\, p(b|x,y,\lambda) 
\end{equation}
for some hidden $p(b|x,y,\lambda)$. Hence, one cannot simply verify the security of the protocol for each $G(\lambda)$.

Instead, we recover a pure state situation by explicitly including the hidden noise parameter $\lambda$ in the state. That is, we consider Alice preparing states of the form 
\begin{equation}\label{eq: states with noise label}
    \ket{\psi_x} = \int \dd \lambda \sqrt{p(\lambda)} \ket{\phi_x(\lambda)}\ket{\lambda},
\end{equation}
with the ``label'' states for the hidden noise parameter respecting  $\braket{\lambda}{\mu}=\delta(\lambda-\mu)$. By doing so we give the noise label $\lambda$ to Eve who can control it coherently but is bound to respect our noisy model of the device given by $p(\lambda)$. It is straightforward to see that the resulting Gram matrix for the states $\{\ket{\psi_x}\}_{x=0}^{n-1}$ is simply the average
\begin{equation}
\overline G = \int \dd\lambda \, p(\lambda) \, G(\lambda).   
\end{equation}
Consequently, verifying the security of the protocol for $\overline G$ guarantees its security for the original model.

\subsection{Partially characterized correlated noise models}

In some situations the full model with the knowledge of the distribution $p(\lambda)$ might not be appropriate, as it requires a complete, precise characterization of the noise mechanisms present. Instead one can only guarantee (with the desired level of confidence) that in each round the preparation device obeys to the model 
\begin{equation}\label{eq: parametric range}
\textrm{Alice} \simeq \{\ket{\phi_x(\lambda)}\}_{x=0}^{n-1} \quad \textrm{with} \quad \lambda \in \Lambda,
\end{equation}
where $\Lambda$ specifies the range of possible $\lambda$. From there we can recover the previous case by noting that any realization of such model corresponds to the states $\{\ket{\psi_x}\}_{x=0}^{n-1}$ in Eq.~\eqref{eq: states with noise label} for \textit{some} probability density $p(\lambda)$ on $\Lambda$. The resulting average Gram matrix then necessarily belongs to the set
\begin{equation}\begin{split}
\overline G \, &\in\,  \widehat{G}_\Lambda \\
G_\Lambda &= \{ G(\lambda)|\lambda \in \Lambda \},
    \end{split}
\end{equation}
where the hat $\widehat{G}_\Lambda$ denotes the convex hull of the set $G_\Lambda$. In principle, it remains to determine the worst case $\overline G$ inside the set, with respect to the key rate it implies, in order to guarantee the security for the noise model. 

Practically, this problem is however computationally hard. And instead of solving it directly it is convenient to further relax the constraints on $\overline G$ to a form that one can easily include in the security analysis described in Section~\ref{sec:securityAnalysis}. This can be done by constraining each entry of the Gram matrix $\overline G_{ij}$ \textit{independently}. Concretely, the set $G_\Lambda$ can be relaxed to a collection of constraints
\begin{equation}
    \begin{split}
        r_{ij}\leq &\, \textrm{Re}[G_{ij}]\,\leq R_{ij}\\
        i_{ij}\leq &\, \textrm{Im}[G_{ij}]\,\leq I_{ij},
    \end{split}
\end{equation}
on the real and imaginary parts of each entry of the matrices $G\in G_\Lambda$. Being linear these constraints remain valid for the convex hull set $\widehat{G}_\Lambda$. Most importantly, they are very simple to include in the SDP. The equality constraint $\Gamma^{11}_{ij} = G_{ij}$ in Eq.~\eqref{eq: sdpconsG} translates in two inequalities on the real and imaginary part of $\Gamma$ for $i\neq j$ 
\begin{equation}
    \begin{split}
        r_{ij}\leq &\, \textrm{Re}[\Gamma^{\id\id}_{ij}] \,\leq R_{ij}\\
        i_{ij}\leq &\, \textrm{Im}[\Gamma^{\id\id}_{ij}] \,\leq I_{ij}.
    \end{split}
\end{equation}
Through the SDP the Gram matrix is constraint to be positive semidefinite. Hence, the set of states described by the Gram matrix which maximizes $p_g(e=k|\text{succ})$ remains physical.

\section{Comparison to other QKD models}
\label{sec:comparison}

In this section we present a brief comparison of our RDI protocols with other models for partially DI QKD. 

Let us start with the one-sided DI model first proposed in Ref.~\cite{Tomamichel2011}. The model applies to a prepare-and-measure scenario, where the receiver is untrusted (as in the RDI model), while the sender uses a fully characterized device. Ref.~\cite{Tomamichel2011} demonstrates security with the help of generalized tripartite uncertainty relations, but no practical considerations are discussed. Subsequent works \cite{Tomamichel2012, Tomamichel2017} discussed the practical requirements of such an approach, considering the effect of noise and finite-size data; see also Refs \cite{Koashi2006,Hayashi2012} for similar analysis based on different proof techniques. However, the effect of losses is either not discussed~\cite{Tomamichel2011,Koashi2006}, or their analysis is based on a fair sampling type assumption \cite{Tomamichel2012,Hayashi2012,Tomamichel2017}, where the detection of the photon is assumed to be independent from the choice of measurement made by the receiver. Note that the fair sampling  assumption allows one to attribute the non-detection events to a filter applied on the system before the measurement and essentially discard the no-click events in the security analysis \cite{OrsucciFS}.
In practice, however, such an assumption is hard to justify in adversarial scenarios like QKD, as it opens the door to blinding attacks exploiting the fact that Eve can steer Bob's detector to click or not depending on the measurement setting~\cite{Lydersen2010}. Therefore, the results of Ref.~\cite{Tomamichel2012, Tomamichel2017} (and also \cite{Koashi2006,Hayashi2012}) cannot be applied to a practical QKD setup without sacrificing the one-sided DI security. Another notable point is that the security analysis of~\cite{Tomamichel2012, Tomamichel2017} relies on an entropic uncertainty relation for a pair of measurements. In the prepare-and-measure setting this approach thus applies to protocols where Alice prepares four states (grouped in two pairs of orthogonal states), and Bob performs two measurements. It is unclear to us whether such an approach can tackle more general protocols \footnote{Note that uncertainty relations for more than two measurements have been investigated, but obtaining good bounds for them appears to be challenging.}, with more states and measurements. Indeed, these cases are important, as the number of measurements of Bob must become large in order to accommodate for low transmissions. Notably, the protocols that we analyze here using the overlap method, where Alice prepares an increasing number of qubit states, can allow for an arbitrarily low transmission.

The one-sided DI model can also be investigated in an entanglement-based scheme where one of the parties is trusted while the other is considered as a black-box \cite{Tomamichel2011}. This approach was discussed in Ref.~\cite{Branciard2012}, establishing a connection with the effect of quantum steering. Here, both noise and losses are taken into account. The requirements in terms of detection efficiency are high ($\eta > 65.9 \%$), hence providing only minor improvements over the DI model. In practice, entanglement-based one-sided DI QKD has never been implemented.

Our RDI protocols therefore provide a number of improvements over previous works on the one-sided DI scenario. First, our protocols are shown to be secure in a prepare-and-measure scenario taking into account both noise and losses; note that the companion paper \cite{Ioannou2021} considers also finite-size effect for the proof-of-principle experiment. In particular, RDI protocols can in principle allow for an arbitrarily low transmission, as we discussed. Compared to the approach of Ref.~\cite{Branciard2012}, the experimental realisation is greatly simplified, as no source of entanglement is necessary, and much lower detection efficiencies can be tolerated. Moreover, in our case, the characterized party (Alice) acts as a sender, while in the one-sided model Alice holds a measurement device. Having to trust a preparation device instead of a measurement device is arguably an advantage.

Another SDI approach presented in Ref.~\cite{Pawlowski2011,WP2015} shares more similarities with our approach. The authors consider prepare-and-measure scenario where Alice's device is assumed to prepare quantum states of bounded Hilbert space dimension (for instance qubits), while Bob's device is completely black-box. This represents a very different type of assumption on the preparations, which is however arguably difficult to justify in practice; indeed a photon is not a qubit, and has many other degrees of freedom than (say) polarisation \footnote{More generally, practical systems are usually described as infinite dimensional systems, for instance when considering optical implementations based on coherent states.}. In this sense, we believe that our RDI approach is more naturally tailored to experiments, as it can deal with systems of arbitrary (possibly infinite) dimension. Another important advantage in practice, is the robustness to losses. Indeed, dimension-based SDI protocols are also sensitive to detection-loophole-type attacks and thus require detection efficiencies comparable to Bell tests \cite{dallarno2015}. This renders their practical implementation challenging. To the best of our knowledge, no experiment has been reported so far. Another related approach was developed in Ref. \cite{Goh2016}, considering an entanglement-based QDK setup assuming only the dimension of the entangled state prepared by the source. Again, practical implementation is challenging due to high detection efficiency requirements.

Finally, we compare our RDI model to the MDI approach \cite{Lo2012,Braunstein12}. Both approaches aim at relaxing trust on the measurement device. While we do this in the prepare-and-measure scenario, the MDI model considers an additional party (Charlie), located in between Alice and Bob and who acts as a relay. Charlie's (measurement) device is then fully untrusted, while Alice's and Bob's (preparation) devices must be well characterized. In practice, a strong advantage of the MDI approach is its robustness to losses, leading to record-distance experiments \cite{Yin2016,Pittaluga2021,Chen2021511,Chen2021658}. In a scenario where both end parties, Alice and Bob, have means to characterize and test their devices (or good reasons to believe the devices function correctly), the MDI approach is a good choice. However, in a scenario where one of the parties does not have the resources (or the expertise) for testing and characterizing their device (or reasons not to trust their devices, for instance a possible malfunctioning due to ageing), the RDI approach provides a good solution. In contrast, the MDI approach cannot be used here, as Bob's (nor Alice's) device can be described by a black-box; some level of trust on both Alice and Bob will always be required in the MDI case.

\section{Conclusion}

We have discussed QKD protocols considering a receiver-device-independent (RDI) model. We presented a security analysis and investigated limitations of these protocols. Notably, we showed that our protocols can in principle allow for an arbitrarily low transmission (detection efficiency). We also provided a detailed discussion concerning the relevance of our approach in a practical context, in particular discussing how the overlap assumption can be justified. These results complement a recent (companion) paper, where a proof-of-principle RDI QKD experiment has been reported~\cite{Ioannou2021}.

To conclude, we discuss a number of open questions. A first interesting question is to derive stronger bounds on the secret key rate. This may be possible using techniques recently developed in Ref.~\cite{Brown2021} providing lower bounds on the conditional von Neumann entropy (instead of the conditional min-entropy, as we consider here) from observed data. Elements from the approach used in Ref.~\cite{Tan2021} might also be useful here.

Another question is to turn our asymptotic key rate into a finite key length when a finite number of systems are exchanges between Alice and Bob. A natural route towards this goal consists in using the entropy accumulation theorem \cite{EAT}, although it is still unclear whether this approach can be adapted to the prepare-and-measure scenario.

An important direction to pursue is to look for RDI protocols that can achieve long distance and are practical. Here we presented protocols that can tolerate the minimum possible transmission (depending on the number of measurements $n$ made by Bob) in the RDI model. In practice, the drawback of our protocols is the sifting, which, for large $n$, renders the protocols inefficient. Developing more efficient protocols would represent significant progress.

Finally, we note our approach shares similarities with the recent work of Ref.~\cite{Tavakoli21}, where the author investigates correlations in a prepare-and-measure scenario with bounded distrust in the preparations. Specifically, the fidelity of the prepared states with respect to some reference state is lower bounded. Hence the distance between the actual and ideal states is bounded. In our approach we bound the distance between the prepared states via their pairwise overlaps.

\medskip

\emph{Acknowledgements.---}We thank Jonathan Brask, Davide Rusca and Hugo Zbinden for discussions. We acknowledge financial support from the EU Quantum Flagship project QRANGE, and the Swiss National Science Foundation (BRIDGE, project 2000021\_192244/1 and NCCR QSIT).

\bibliography{SDIQKD}

%apsrev4-2.bst 2019-01-14 (MD) hand-edited version of apsrev4-1.bst
%Control: key (0)
%Control: author (8) initials jnrlst
%Control: editor formatted (1) identically to author
%Control: production of article title (0) allowed
%Control: page (0) single
%Control: year (1) truncated
%Control: production of eprint (0) enabled
\begin{thebibliography}{54}%
\makeatletter
\providecommand \@ifxundefined [1]{%
 \@ifx{#1\undefined}
}%
\providecommand \@ifnum [1]{%
 \ifnum #1\expandafter \@firstoftwo
 \else \expandafter \@secondoftwo
 \fi
}%
\providecommand \@ifx [1]{%
 \ifx #1\expandafter \@firstoftwo
 \else \expandafter \@secondoftwo
 \fi
}%
\providecommand \natexlab [1]{#1}%
\providecommand \enquote  [1]{``#1''}%
\providecommand \bibnamefont  [1]{#1}%
\providecommand \bibfnamefont [1]{#1}%
\providecommand \citenamefont [1]{#1}%
\providecommand \href@noop [0]{\@secondoftwo}%
\providecommand \href [0]{\begingroup \@sanitize@url \@href}%
\providecommand \@href[1]{\@@startlink{#1}\@@href}%
\providecommand \@@href[1]{\endgroup#1\@@endlink}%
\providecommand \@sanitize@url [0]{\catcode `\\12\catcode `\$12\catcode
  `\&12\catcode `\#12\catcode `\^12\catcode `\_12\catcode `\%12\relax}%
\providecommand \@@startlink[1]{}%
\providecommand \@@endlink[0]{}%
\providecommand \url  [0]{\begingroup\@sanitize@url \@url }%
\providecommand \@url [1]{\endgroup\@href {#1}{\urlprefix }}%
\providecommand \urlprefix  [0]{URL }%
\providecommand \Eprint [0]{\href }%
\providecommand \doibase [0]{https://doi.org/}%
\providecommand \selectlanguage [0]{\@gobble}%
\providecommand \bibinfo  [0]{\@secondoftwo}%
\providecommand \bibfield  [0]{\@secondoftwo}%
\providecommand \translation [1]{[#1]}%
\providecommand \BibitemOpen [0]{}%
\providecommand \bibitemStop [0]{}%
\providecommand \bibitemNoStop [0]{.\EOS\space}%
\providecommand \EOS [0]{\spacefactor3000\relax}%
\providecommand \BibitemShut  [1]{\csname bibitem#1\endcsname}%
\let\auto@bib@innerbib\@empty
%</preamble>
\bibitem [{\citenamefont {Bennett}\ and\ \citenamefont
  {Brassard}(1984)}]{Bennett1984}%
  \BibitemOpen
  \bibfield  {author} {\bibinfo {author} {\bibfnamefont {C.~H.}\ \bibnamefont
  {Bennett}}\ and\ \bibinfo {author} {\bibfnamefont {G.}~\bibnamefont
  {Brassard}},\ }\bibfield  {title} {\bibinfo {title} {Quantum cryptography:
  public key distribution and coin tossing},\ }in\ \href@noop {} {\emph
  {\bibinfo {booktitle} {Proceedings of IEEE International Conference on
  Computers, Systems, and Signal Processing}}}\ (\bibinfo  {publisher} {IEEE},\
  \bibinfo {year} {1984})\ pp.\ \bibinfo {pages} {175--179}\BibitemShut
  {NoStop}%
\bibitem [{\citenamefont {Ekert}(1991)}]{Ekert1991}%
  \BibitemOpen
  \bibfield  {author} {\bibinfo {author} {\bibfnamefont {A.~K.}\ \bibnamefont
  {Ekert}},\ }\bibfield  {title} {\bibinfo {title} {Quantum cryptography based
  on {B}ell's theorem},\ }\href {https://doi.org/10.1103/PhysRevLett.67.661}
  {\bibfield  {journal} {\bibinfo  {journal} {Phys. Rev. Lett.}\ }\textbf
  {\bibinfo {volume} {67}},\ \bibinfo {pages} {661} (\bibinfo {year}
  {1991})}\BibitemShut {NoStop}%
\bibitem [{\citenamefont {Vernam}(1926)}]{Vernam1926}%
  \BibitemOpen
  \bibfield  {author} {\bibinfo {author} {\bibfnamefont {G.~S.}\ \bibnamefont
  {Vernam}},\ }\bibfield  {title} {\bibinfo {title} {Cipher printing telegraph
  systems: For secret wire and radio telegraphic communications},\ }\href@noop
  {} {\bibfield  {journal} {\bibinfo  {journal} {J. AIEE}\ }\textbf {\bibinfo
  {volume} {45}},\ \bibinfo {pages} {109} (\bibinfo {year} {1926})}\BibitemShut
  {NoStop}%
\bibitem [{\citenamefont {Scarani}\ \emph {et~al.}(2009)\citenamefont
  {Scarani}, \citenamefont {Bechmann-Pasquinucci}, \citenamefont {Cerf},
  \citenamefont {Du\ifmmode~\check{s}\else \v{s}\fi{}ek}, \citenamefont
  {L\"utkenhaus},\ and\ \citenamefont {Peev}}]{Scarani2009}%
  \BibitemOpen
  \bibfield  {author} {\bibinfo {author} {\bibfnamefont {V.}~\bibnamefont
  {Scarani}}, \bibinfo {author} {\bibfnamefont {H.}~\bibnamefont
  {Bechmann-Pasquinucci}}, \bibinfo {author} {\bibfnamefont {N.~J.}\
  \bibnamefont {Cerf}}, \bibinfo {author} {\bibfnamefont {M.}~\bibnamefont
  {Du\ifmmode~\check{s}\else \v{s}\fi{}ek}}, \bibinfo {author} {\bibfnamefont
  {N.}~\bibnamefont {L\"utkenhaus}},\ and\ \bibinfo {author} {\bibfnamefont
  {M.}~\bibnamefont {Peev}},\ }\bibfield  {title} {\bibinfo {title} {The
  security of practical quantum key distribution},\ }\href
  {https://doi.org/10.1103/RevModPhys.81.1301} {\bibfield  {journal} {\bibinfo
  {journal} {Rev. Mod. Phys.}\ }\textbf {\bibinfo {volume} {81}},\ \bibinfo
  {pages} {1301} (\bibinfo {year} {2009})}\BibitemShut {NoStop}%
\bibitem [{\citenamefont {Lo}\ \emph {et~al.}(2014)\citenamefont {Lo},
  \citenamefont {Curty},\ and\ \citenamefont {Tamaki}}]{Lo2014}%
  \BibitemOpen
  \bibfield  {author} {\bibinfo {author} {\bibfnamefont {H.-K.}\ \bibnamefont
  {Lo}}, \bibinfo {author} {\bibfnamefont {M.}~\bibnamefont {Curty}},\ and\
  \bibinfo {author} {\bibfnamefont {K.}~\bibnamefont {Tamaki}},\ }\bibfield
  {title} {\bibinfo {title} {Secure quantum key distribution},\ }\href
  {https://doi.org/10.1038/nphoton.2014.149} {\bibfield  {journal} {\bibinfo
  {journal} {Nature Photonics}\ }\textbf {\bibinfo {volume} {8}},\ \bibinfo
  {pages} {595} (\bibinfo {year} {2014})}\BibitemShut {NoStop}%
\bibitem [{\citenamefont {Diamanti}\ \emph {et~al.}(2016)\citenamefont
  {Diamanti}, \citenamefont {Lo}, \citenamefont {Qi},\ and\ \citenamefont
  {Yuan}}]{Diamanti2016}%
  \BibitemOpen
  \bibfield  {author} {\bibinfo {author} {\bibfnamefont {E.}~\bibnamefont
  {Diamanti}}, \bibinfo {author} {\bibfnamefont {H.-K.}\ \bibnamefont {Lo}},
  \bibinfo {author} {\bibfnamefont {B.}~\bibnamefont {Qi}},\ and\ \bibinfo
  {author} {\bibfnamefont {Z.}~\bibnamefont {Yuan}},\ }\bibfield  {title}
  {\bibinfo {title} {Practical challenges in quantum key distribution},\ }\href
  {https://doi.org/10.1038/npjqi.2016.25} {\bibfield  {journal} {\bibinfo
  {journal} {npj Quantum Inf.}\ }\textbf {\bibinfo {volume} {2}},\ \bibinfo
  {pages} {1} (\bibinfo {year} {2016})}\BibitemShut {NoStop}%
\bibitem [{\citenamefont {Xu}\ \emph {et~al.}(2020)\citenamefont {Xu},
  \citenamefont {Ma}, \citenamefont {Zhang}, \citenamefont {Lo},\ and\
  \citenamefont {Pan}}]{Xu2020}%
  \BibitemOpen
  \bibfield  {author} {\bibinfo {author} {\bibfnamefont {F.}~\bibnamefont
  {Xu}}, \bibinfo {author} {\bibfnamefont {X.}~\bibnamefont {Ma}}, \bibinfo
  {author} {\bibfnamefont {Q.}~\bibnamefont {Zhang}}, \bibinfo {author}
  {\bibfnamefont {H.-K.}\ \bibnamefont {Lo}},\ and\ \bibinfo {author}
  {\bibfnamefont {J.-W.}\ \bibnamefont {Pan}},\ }\bibfield  {title} {\bibinfo
  {title} {Secure quantum key distribution with realistic devices},\ }\href
  {https://doi.org/10.1103/RevModPhys.92.025002} {\bibfield  {journal}
  {\bibinfo  {journal} {Rev. Mod. Phys.}\ }\textbf {\bibinfo {volume} {92}},\
  \bibinfo {pages} {025002} (\bibinfo {year} {2020})}\BibitemShut {NoStop}%
\bibitem [{\citenamefont {Lydersen}\ \emph {et~al.}(2010)\citenamefont
  {Lydersen}, \citenamefont {Wiechers}, \citenamefont {Wittmann}, \citenamefont
  {Elser}, \citenamefont {Skaar},\ and\ \citenamefont
  {Makarov}}]{Lydersen2010}%
  \BibitemOpen
  \bibfield  {author} {\bibinfo {author} {\bibfnamefont {L.}~\bibnamefont
  {Lydersen}}, \bibinfo {author} {\bibfnamefont {C.}~\bibnamefont {Wiechers}},
  \bibinfo {author} {\bibfnamefont {C.}~\bibnamefont {Wittmann}}, \bibinfo
  {author} {\bibfnamefont {D.}~\bibnamefont {Elser}}, \bibinfo {author}
  {\bibfnamefont {J.}~\bibnamefont {Skaar}},\ and\ \bibinfo {author}
  {\bibfnamefont {V.}~\bibnamefont {Makarov}},\ }\bibfield  {title} {\bibinfo
  {title} {Hacking commercial quantum cryptography systems by tailored bright
  illumination},\ }\href {https://doi.org/10.1038/nphoton.2010.214} {\bibfield
  {journal} {\bibinfo  {journal} {Nature Photonics}\ }\textbf {\bibinfo
  {volume} {4}},\ \bibinfo {pages} {686} (\bibinfo {year} {2010})}\BibitemShut
  {NoStop}%
\bibitem [{\citenamefont {Gerhardt}\ \emph {et~al.}(2011)\citenamefont
  {Gerhardt}, \citenamefont {Liu}, \citenamefont {Lamas-Linares}, \citenamefont
  {Skaar}, \citenamefont {Kurtsiefer},\ and\ \citenamefont
  {Makarov}}]{Gerhardt2011}%
  \BibitemOpen
  \bibfield  {author} {\bibinfo {author} {\bibfnamefont {I.}~\bibnamefont
  {Gerhardt}}, \bibinfo {author} {\bibfnamefont {Q.}~\bibnamefont {Liu}},
  \bibinfo {author} {\bibfnamefont {A.}~\bibnamefont {Lamas-Linares}}, \bibinfo
  {author} {\bibfnamefont {J.}~\bibnamefont {Skaar}}, \bibinfo {author}
  {\bibfnamefont {C.}~\bibnamefont {Kurtsiefer}},\ and\ \bibinfo {author}
  {\bibfnamefont {V.}~\bibnamefont {Makarov}},\ }\bibfield  {title} {\bibinfo
  {title} {Full-field implementation of a perfect eavesdropper on a quantum
  cryptography system},\ }\href {https://doi.org/10.1038/ncomms1348} {\bibfield
   {journal} {\bibinfo  {journal} {Nature Commun.}\ }\textbf {\bibinfo {volume}
  {2}},\ \bibinfo {pages} {349} (\bibinfo {year} {2011})}\BibitemShut {NoStop}%
\bibitem [{\citenamefont {Pearle}(1970)}]{Pearle}%
  \BibitemOpen
  \bibfield  {author} {\bibinfo {author} {\bibfnamefont {P.~M.}\ \bibnamefont
  {Pearle}},\ }\bibfield  {title} {\bibinfo {title} {Hidden-variable example
  based upon data rejection},\ }\href {https://doi.org/10.1103/PhysRevD.2.1418}
  {\bibfield  {journal} {\bibinfo  {journal} {Phys. Rev. D}\ }\textbf {\bibinfo
  {volume} {2}},\ \bibinfo {pages} {1418} (\bibinfo {year} {1970})}\BibitemShut
  {NoStop}%
\bibitem [{\citenamefont {Berry}\ \emph {et~al.}(2010)\citenamefont {Berry},
  \citenamefont {Jeong}, \citenamefont {Stobi{\'n}ska},\ and\ \citenamefont
  {Ralph}}]{berryFS}%
  \BibitemOpen
  \bibfield  {author} {\bibinfo {author} {\bibfnamefont {D.~W.}\ \bibnamefont
  {Berry}}, \bibinfo {author} {\bibfnamefont {H.}~\bibnamefont {Jeong}},
  \bibinfo {author} {\bibfnamefont {M.}~\bibnamefont {Stobi{\'n}ska}},\ and\
  \bibinfo {author} {\bibfnamefont {T.~C.}\ \bibnamefont {Ralph}},\ }\bibfield
  {title} {\bibinfo {title} {Fair-sampling assumption is not necessary for
  testing local realism},\ }\href@noop {} {\bibfield  {journal} {\bibinfo
  {journal} {Physical Review A}\ }\textbf {\bibinfo {volume} {81}},\ \bibinfo
  {pages} {012109} (\bibinfo {year} {2010})}\BibitemShut {NoStop}%
\bibitem [{\citenamefont {Orsucci}\ \emph {et~al.}(2020)\citenamefont
  {Orsucci}, \citenamefont {Bancal}, \citenamefont {Sangouard},\ and\
  \citenamefont {Sekatski}}]{OrsucciFS}%
  \BibitemOpen
  \bibfield  {author} {\bibinfo {author} {\bibfnamefont {D.}~\bibnamefont
  {Orsucci}}, \bibinfo {author} {\bibfnamefont {J.-D.}\ \bibnamefont {Bancal}},
  \bibinfo {author} {\bibfnamefont {N.}~\bibnamefont {Sangouard}},\ and\
  \bibinfo {author} {\bibfnamefont {P.}~\bibnamefont {Sekatski}},\ }\bibfield
  {title} {\bibinfo {title} {How post-selection affects device-independent
  claims under the fair sampling assumption},\ }\href
  {https://doi.org/10.22331/q-2020-03-02-238} {\bibfield  {journal} {\bibinfo
  {journal} {{Quantum}}\ }\textbf {\bibinfo {volume} {4}},\ \bibinfo {pages}
  {238} (\bibinfo {year} {2020})}\BibitemShut {NoStop}%
\bibitem [{\citenamefont {Ac\'{\i}n}\ \emph {et~al.}(2007)\citenamefont
  {Ac\'{\i}n}, \citenamefont {Brunner}, \citenamefont {Gisin}, \citenamefont
  {Massar}, \citenamefont {Pironio},\ and\ \citenamefont {Scarani}}]{Acin2007}%
  \BibitemOpen
  \bibfield  {author} {\bibinfo {author} {\bibfnamefont {A.}~\bibnamefont
  {Ac\'{\i}n}}, \bibinfo {author} {\bibfnamefont {N.}~\bibnamefont {Brunner}},
  \bibinfo {author} {\bibfnamefont {N.}~\bibnamefont {Gisin}}, \bibinfo
  {author} {\bibfnamefont {S.}~\bibnamefont {Massar}}, \bibinfo {author}
  {\bibfnamefont {S.}~\bibnamefont {Pironio}},\ and\ \bibinfo {author}
  {\bibfnamefont {V.}~\bibnamefont {Scarani}},\ }\bibfield  {title} {\bibinfo
  {title} {Device-independent security of quantum cryptography against
  collective attacks},\ }\href {https://doi.org/10.1103/PhysRevLett.98.230501}
  {\bibfield  {journal} {\bibinfo  {journal} {Phys. Rev. Lett.}\ }\textbf
  {\bibinfo {volume} {98}},\ \bibinfo {pages} {230501} (\bibinfo {year}
  {2007})}\BibitemShut {NoStop}%
\bibitem [{\citenamefont {Pironio}\ \emph {et~al.}(2009)\citenamefont
  {Pironio}, \citenamefont {Ac{\'{\i}}n}, \citenamefont {Brunner},
  \citenamefont {Gisin}, \citenamefont {Massar},\ and\ \citenamefont
  {Scarani}}]{Pironio2009}%
  \BibitemOpen
  \bibfield  {author} {\bibinfo {author} {\bibfnamefont {S.}~\bibnamefont
  {Pironio}}, \bibinfo {author} {\bibfnamefont {A.}~\bibnamefont
  {Ac{\'{\i}}n}}, \bibinfo {author} {\bibfnamefont {N.}~\bibnamefont
  {Brunner}}, \bibinfo {author} {\bibfnamefont {N.}~\bibnamefont {Gisin}},
  \bibinfo {author} {\bibfnamefont {S.}~\bibnamefont {Massar}},\ and\ \bibinfo
  {author} {\bibfnamefont {V.}~\bibnamefont {Scarani}},\ }\bibfield  {title}
  {\bibinfo {title} {Device-independent quantum key distribution secure against
  collective attacks},\ }\href {https://doi.org/10.1088/1367-2630/11/4/045021}
  {\bibfield  {journal} {\bibinfo  {journal} {New J. Phys.}\ }\textbf {\bibinfo
  {volume} {11}},\ \bibinfo {pages} {045021} (\bibinfo {year}
  {2009})}\BibitemShut {NoStop}%
\bibitem [{\citenamefont {Vazirani}\ and\ \citenamefont
  {Vidick}(2014)}]{Vazirani}%
  \BibitemOpen
  \bibfield  {author} {\bibinfo {author} {\bibfnamefont {U.}~\bibnamefont
  {Vazirani}}\ and\ \bibinfo {author} {\bibfnamefont {T.}~\bibnamefont
  {Vidick}},\ }\bibfield  {title} {\bibinfo {title} {Fully device-independent
  quantum key distribution},\ }\href
  {https://doi.org/10.1103/PhysRevLett.113.140501} {\bibfield  {journal}
  {\bibinfo  {journal} {Phys. Rev. Lett.}\ }\textbf {\bibinfo {volume} {113}},\
  \bibinfo {pages} {140501} (\bibinfo {year} {2014})}\BibitemShut {NoStop}%
\bibitem [{\citenamefont {Arnon-Friedman}\ \emph {et~al.}(2018)\citenamefont
  {Arnon-Friedman}, \citenamefont {Dupuis}, \citenamefont {Fawzi},
  \citenamefont {Renner},\ and\ \citenamefont {Vidick}}]{Rotem2018}%
  \BibitemOpen
  \bibfield  {author} {\bibinfo {author} {\bibfnamefont {R.}~\bibnamefont
  {Arnon-Friedman}}, \bibinfo {author} {\bibfnamefont {F.}~\bibnamefont
  {Dupuis}}, \bibinfo {author} {\bibfnamefont {O.}~\bibnamefont {Fawzi}},
  \bibinfo {author} {\bibfnamefont {R.}~\bibnamefont {Renner}},\ and\ \bibinfo
  {author} {\bibfnamefont {T.}~\bibnamefont {Vidick}},\ }\bibfield  {title}
  {\bibinfo {title} {Practical device-independent quantum cryptography via
  entropy accumulation},\ }\href {https://doi.org/10.1038/s41467-017-02307-4}
  {\bibfield  {journal} {\bibinfo  {journal} {Nature Commun.}\ }\textbf
  {\bibinfo {volume} {9}},\ \bibinfo {pages} {459} (\bibinfo {year}
  {2018})}\BibitemShut {NoStop}%
\bibitem [{\citenamefont {Liu}\ \emph {et~al.}(2021)\citenamefont {Liu},
  \citenamefont {Zhang}, \citenamefont {Zhen}, \citenamefont {Li},
  \citenamefont {Liu}, \citenamefont {Fan}, \citenamefont {Xu}, \citenamefont
  {Zhang},\ and\ \citenamefont {Pan}}]{liu2021}%
  \BibitemOpen
  \bibfield  {author} {\bibinfo {author} {\bibfnamefont {W.-Z.}\ \bibnamefont
  {Liu}}, \bibinfo {author} {\bibfnamefont {Y.-Z.}\ \bibnamefont {Zhang}},
  \bibinfo {author} {\bibfnamefont {Y.-Z.}\ \bibnamefont {Zhen}}, \bibinfo
  {author} {\bibfnamefont {M.-H.}\ \bibnamefont {Li}}, \bibinfo {author}
  {\bibfnamefont {Y.}~\bibnamefont {Liu}}, \bibinfo {author} {\bibfnamefont
  {J.}~\bibnamefont {Fan}}, \bibinfo {author} {\bibfnamefont {F.}~\bibnamefont
  {Xu}}, \bibinfo {author} {\bibfnamefont {Q.}~\bibnamefont {Zhang}},\ and\
  \bibinfo {author} {\bibfnamefont {J.-W.}\ \bibnamefont {Pan}},\ }\href@noop
  {} {\bibinfo {title} {High-speed device-independent quantum key distribution
  against collective attacks}} (\bibinfo {year} {2021}),\ \Eprint
  {https://arxiv.org/abs/2110.01480} {arXiv:2110.01480 [quant-ph]} \BibitemShut
  {NoStop}%
\bibitem [{\citenamefont {Nadlinger}\ \emph {et~al.}(2021)\citenamefont
  {Nadlinger}, \citenamefont {Drmota}, \citenamefont {Nichol}, \citenamefont
  {Araneda}, \citenamefont {Main}, \citenamefont {Srinivas}, \citenamefont
  {Lucas}, \citenamefont {Ballance}, \citenamefont {Ivanov}, \citenamefont
  {Tan}, \citenamefont {Sekatski}, \citenamefont {Urbanke}, \citenamefont
  {Renner}, \citenamefont {Sangouard},\ and\ \citenamefont
  {Bancal}}]{nadlinger2021}%
  \BibitemOpen
  \bibfield  {author} {\bibinfo {author} {\bibfnamefont {D.~P.}\ \bibnamefont
  {Nadlinger}}, \bibinfo {author} {\bibfnamefont {P.}~\bibnamefont {Drmota}},
  \bibinfo {author} {\bibfnamefont {B.~C.}\ \bibnamefont {Nichol}}, \bibinfo
  {author} {\bibfnamefont {G.}~\bibnamefont {Araneda}}, \bibinfo {author}
  {\bibfnamefont {D.}~\bibnamefont {Main}}, \bibinfo {author} {\bibfnamefont
  {R.}~\bibnamefont {Srinivas}}, \bibinfo {author} {\bibfnamefont {D.~M.}\
  \bibnamefont {Lucas}}, \bibinfo {author} {\bibfnamefont {C.~J.}\ \bibnamefont
  {Ballance}}, \bibinfo {author} {\bibfnamefont {K.}~\bibnamefont {Ivanov}},
  \bibinfo {author} {\bibfnamefont {E.~Y.-Z.}\ \bibnamefont {Tan}}, \bibinfo
  {author} {\bibfnamefont {P.}~\bibnamefont {Sekatski}}, \bibinfo {author}
  {\bibfnamefont {R.~L.}\ \bibnamefont {Urbanke}}, \bibinfo {author}
  {\bibfnamefont {R.}~\bibnamefont {Renner}}, \bibinfo {author} {\bibfnamefont
  {N.}~\bibnamefont {Sangouard}},\ and\ \bibinfo {author} {\bibfnamefont
  {J.-D.}\ \bibnamefont {Bancal}},\ }\href@noop {} {\bibinfo {title}
  {Device-independent quantum key distribution}} (\bibinfo {year} {2021}),\
  \Eprint {https://arxiv.org/abs/2109.14600} {arXiv:2109.14600 [quant-ph]}
  \BibitemShut {NoStop}%
\bibitem [{\citenamefont {Zhang}\ \emph {et~al.}(2021)\citenamefont {Zhang},
  \citenamefont {van Leent}, \citenamefont {Redeker}, \citenamefont {Garthoff},
  \citenamefont {Schwonnek}, \citenamefont {Fertig}, \citenamefont {Eppelt},
  \citenamefont {Scarani}, \citenamefont {Lim},\ and\ \citenamefont
  {Weinfurter}}]{zhang2021}%
  \BibitemOpen
  \bibfield  {author} {\bibinfo {author} {\bibfnamefont {W.}~\bibnamefont
  {Zhang}}, \bibinfo {author} {\bibfnamefont {T.}~\bibnamefont {van Leent}},
  \bibinfo {author} {\bibfnamefont {K.}~\bibnamefont {Redeker}}, \bibinfo
  {author} {\bibfnamefont {R.}~\bibnamefont {Garthoff}}, \bibinfo {author}
  {\bibfnamefont {R.}~\bibnamefont {Schwonnek}}, \bibinfo {author}
  {\bibfnamefont {F.}~\bibnamefont {Fertig}}, \bibinfo {author} {\bibfnamefont
  {S.}~\bibnamefont {Eppelt}}, \bibinfo {author} {\bibfnamefont
  {V.}~\bibnamefont {Scarani}}, \bibinfo {author} {\bibfnamefont {C.~C.~W.}\
  \bibnamefont {Lim}},\ and\ \bibinfo {author} {\bibfnamefont {H.}~\bibnamefont
  {Weinfurter}},\ }\href@noop {} {\bibinfo {title} {Experimental
  device-independent quantum key distribution between distant users}} (\bibinfo
  {year} {2021}),\ \Eprint {https://arxiv.org/abs/2110.00575} {arXiv:2110.00575
  [quant-ph]} \BibitemShut {NoStop}%
\bibitem [{\citenamefont {Lo}\ \emph {et~al.}(2012)\citenamefont {Lo},
  \citenamefont {Curty},\ and\ \citenamefont {Qi}}]{Lo2012}%
  \BibitemOpen
  \bibfield  {author} {\bibinfo {author} {\bibfnamefont {H.-K.}\ \bibnamefont
  {Lo}}, \bibinfo {author} {\bibfnamefont {M.}~\bibnamefont {Curty}},\ and\
  \bibinfo {author} {\bibfnamefont {B.}~\bibnamefont {Qi}},\ }\bibfield
  {title} {\bibinfo {title} {Measurement-device-independent quantum key
  distribution},\ }\href {https://doi.org/10.1103/PhysRevLett.108.130503}
  {\bibfield  {journal} {\bibinfo  {journal} {Phys. Rev. Lett.}\ }\textbf
  {\bibinfo {volume} {108}},\ \bibinfo {pages} {130503} (\bibinfo {year}
  {2012})}\BibitemShut {NoStop}%
\bibitem [{\citenamefont {Braunstein}\ and\ \citenamefont
  {Pirandola}(2012)}]{Braunstein12}%
  \BibitemOpen
  \bibfield  {author} {\bibinfo {author} {\bibfnamefont {S.~L.}\ \bibnamefont
  {Braunstein}}\ and\ \bibinfo {author} {\bibfnamefont {S.}~\bibnamefont
  {Pirandola}},\ }\bibfield  {title} {\bibinfo {title} {Side-channel-free
  quantum key distribution},\ }\href
  {https://doi.org/10.1103/PhysRevLett.108.130502} {\bibfield  {journal}
  {\bibinfo  {journal} {Phys. Rev. Lett.}\ }\textbf {\bibinfo {volume} {108}},\
  \bibinfo {pages} {130502} (\bibinfo {year} {2012})}\BibitemShut {NoStop}%
\bibitem [{\citenamefont {Yin}\ \emph {et~al.}(2016)\citenamefont {Yin},
  \citenamefont {Chen}, \citenamefont {Yu}, \citenamefont {Liu}, \citenamefont
  {You}, \citenamefont {Zhou}, \citenamefont {Chen}, \citenamefont {Mao},
  \citenamefont {Huang}, \citenamefont {Zhang}, \citenamefont {Chen},
  \citenamefont {Li}, \citenamefont {Nolan}, \citenamefont {Zhou},
  \citenamefont {Jiang}, \citenamefont {Wang}, \citenamefont {Zhang},
  \citenamefont {Wang},\ and\ \citenamefont {Pan}}]{Yin2016}%
  \BibitemOpen
  \bibfield  {author} {\bibinfo {author} {\bibfnamefont {H.-L.}\ \bibnamefont
  {Yin}}, \bibinfo {author} {\bibfnamefont {T.-Y.}\ \bibnamefont {Chen}},
  \bibinfo {author} {\bibfnamefont {Z.-W.}\ \bibnamefont {Yu}}, \bibinfo
  {author} {\bibfnamefont {H.}~\bibnamefont {Liu}}, \bibinfo {author}
  {\bibfnamefont {L.-X.}\ \bibnamefont {You}}, \bibinfo {author} {\bibfnamefont
  {Y.-H.}\ \bibnamefont {Zhou}}, \bibinfo {author} {\bibfnamefont {S.-J.}\
  \bibnamefont {Chen}}, \bibinfo {author} {\bibfnamefont {Y.}~\bibnamefont
  {Mao}}, \bibinfo {author} {\bibfnamefont {M.-Q.}\ \bibnamefont {Huang}},
  \bibinfo {author} {\bibfnamefont {W.-J.}\ \bibnamefont {Zhang}}, \bibinfo
  {author} {\bibfnamefont {H.}~\bibnamefont {Chen}}, \bibinfo {author}
  {\bibfnamefont {M.~J.}\ \bibnamefont {Li}}, \bibinfo {author} {\bibfnamefont
  {D.}~\bibnamefont {Nolan}}, \bibinfo {author} {\bibfnamefont
  {F.}~\bibnamefont {Zhou}}, \bibinfo {author} {\bibfnamefont {X.}~\bibnamefont
  {Jiang}}, \bibinfo {author} {\bibfnamefont {Z.}~\bibnamefont {Wang}},
  \bibinfo {author} {\bibfnamefont {Q.}~\bibnamefont {Zhang}}, \bibinfo
  {author} {\bibfnamefont {X.-B.}\ \bibnamefont {Wang}},\ and\ \bibinfo
  {author} {\bibfnamefont {J.-W.}\ \bibnamefont {Pan}},\ }\bibfield  {title}
  {\bibinfo {title} {Measurement-device-independent quantum key distribution
  over a 404 km optical fiber},\ }\href
  {https://doi.org/10.1103/PhysRevLett.117.190501} {\bibfield  {journal}
  {\bibinfo  {journal} {Phys. Rev. Lett.}\ }\textbf {\bibinfo {volume} {117}},\
  \bibinfo {pages} {190501} (\bibinfo {year} {2016})}\BibitemShut {NoStop}%
\bibitem [{\citenamefont {Pittaluga}\ \emph {et~al.}(2021)\citenamefont
  {Pittaluga}, \citenamefont {Minder}, \citenamefont {Lucamarini},
  \citenamefont {Sanzaro}, \citenamefont {Woodward}, \citenamefont {Li},
  \citenamefont {Yuan},\ and\ \citenamefont {Shields}}]{Pittaluga2021}%
  \BibitemOpen
  \bibfield  {author} {\bibinfo {author} {\bibfnamefont {M.}~\bibnamefont
  {Pittaluga}}, \bibinfo {author} {\bibfnamefont {M.}~\bibnamefont {Minder}},
  \bibinfo {author} {\bibfnamefont {M.}~\bibnamefont {Lucamarini}}, \bibinfo
  {author} {\bibfnamefont {M.}~\bibnamefont {Sanzaro}}, \bibinfo {author}
  {\bibfnamefont {R.~I.}\ \bibnamefont {Woodward}}, \bibinfo {author}
  {\bibfnamefont {M.-J.}\ \bibnamefont {Li}}, \bibinfo {author} {\bibfnamefont
  {Z.}~\bibnamefont {Yuan}},\ and\ \bibinfo {author} {\bibfnamefont {A.~J.}\
  \bibnamefont {Shields}},\ }\bibfield  {title} {\bibinfo {title} {600-km
  repeater-like quantum communications with dual-band stabilization},\ }\href
  {https://doi.org/10.1038/s41566-021-00811-0} {\bibfield  {journal} {\bibinfo
  {journal} {Nature Photonics}\ }\textbf {\bibinfo {volume} {15}},\ \bibinfo
  {pages} {530} (\bibinfo {year} {2021})}\BibitemShut {NoStop}%
\bibitem [{\citenamefont {Chen}\ \emph
  {et~al.}(2021{\natexlab{a}})\citenamefont {Chen}, \citenamefont {Zhang},
  \citenamefont {Liu}, \citenamefont {Jiang}, \citenamefont {Zhang},
  \citenamefont {Han}, \citenamefont {Ma}, \citenamefont {Hu}, \citenamefont
  {Li}, \citenamefont {Liu} \emph {et~al.}}]{Chen2021511}%
  \BibitemOpen
  \bibfield  {author} {\bibinfo {author} {\bibfnamefont {J.-P.}\ \bibnamefont
  {Chen}}, \bibinfo {author} {\bibfnamefont {C.}~\bibnamefont {Zhang}},
  \bibinfo {author} {\bibfnamefont {Y.}~\bibnamefont {Liu}}, \bibinfo {author}
  {\bibfnamefont {C.}~\bibnamefont {Jiang}}, \bibinfo {author} {\bibfnamefont
  {W.-J.}\ \bibnamefont {Zhang}}, \bibinfo {author} {\bibfnamefont {Z.-Y.}\
  \bibnamefont {Han}}, \bibinfo {author} {\bibfnamefont {S.-Z.}\ \bibnamefont
  {Ma}}, \bibinfo {author} {\bibfnamefont {X.-L.}\ \bibnamefont {Hu}}, \bibinfo
  {author} {\bibfnamefont {Y.-H.}\ \bibnamefont {Li}}, \bibinfo {author}
  {\bibfnamefont {H.}~\bibnamefont {Liu}}, \emph {et~al.},\ }\bibfield  {title}
  {\bibinfo {title} {Twin-field quantum key distribution over a 511 km optical
  fibre linking two distant metropolitan areas},\ }\href
  {https://doi.org/10.1038/s41566-021-00828-5} {\bibfield  {journal} {\bibinfo
  {journal} {Nature Photonics}\ }\textbf {\bibinfo {volume} {15}},\ \bibinfo
  {pages} {570} (\bibinfo {year} {2021}{\natexlab{a}})}\BibitemShut {NoStop}%
\bibitem [{\citenamefont {Chen}\ \emph
  {et~al.}(2021{\natexlab{b}})\citenamefont {Chen}, \citenamefont {Zhang},
  \citenamefont {Liu}, \citenamefont {Jiang}, \citenamefont {Zhao},
  \citenamefont {Zhang}, \citenamefont {Chen}, \citenamefont {Li},
  \citenamefont {You}, \citenamefont {Wang}, \citenamefont {Chen},
  \citenamefont {Wang}, \citenamefont {Zhang},\ and\ \citenamefont
  {Pan}}]{Chen2021658}%
  \BibitemOpen
  \bibfield  {author} {\bibinfo {author} {\bibfnamefont {J.-P.}\ \bibnamefont
  {Chen}}, \bibinfo {author} {\bibfnamefont {C.}~\bibnamefont {Zhang}},
  \bibinfo {author} {\bibfnamefont {Y.}~\bibnamefont {Liu}}, \bibinfo {author}
  {\bibfnamefont {C.}~\bibnamefont {Jiang}}, \bibinfo {author} {\bibfnamefont
  {D.-F.}\ \bibnamefont {Zhao}}, \bibinfo {author} {\bibfnamefont {W.-J.}\
  \bibnamefont {Zhang}}, \bibinfo {author} {\bibfnamefont {F.-X.}\ \bibnamefont
  {Chen}}, \bibinfo {author} {\bibfnamefont {H.}~\bibnamefont {Li}}, \bibinfo
  {author} {\bibfnamefont {L.-X.}\ \bibnamefont {You}}, \bibinfo {author}
  {\bibfnamefont {Z.}~\bibnamefont {Wang}}, \bibinfo {author} {\bibfnamefont
  {Y.}~\bibnamefont {Chen}}, \bibinfo {author} {\bibfnamefont {X.-B.}\
  \bibnamefont {Wang}}, \bibinfo {author} {\bibfnamefont {Q.}~\bibnamefont
  {Zhang}},\ and\ \bibinfo {author} {\bibfnamefont {J.-W.}\ \bibnamefont
  {Pan}},\ }\href@noop {} {\bibinfo {title} {Quantum key distribution over 658
  km fiber with distributed vibration sensing}} (\bibinfo {year}
  {2021}{\natexlab{b}}),\ \Eprint {https://arxiv.org/abs/2110.11671}
  {arXiv:2110.11671 [quant-ph]} \BibitemShut {NoStop}%
\bibitem [{\citenamefont {Tomamichel}\ and\ \citenamefont
  {Renner}(2011)}]{Tomamichel2011}%
  \BibitemOpen
  \bibfield  {author} {\bibinfo {author} {\bibfnamefont {M.}~\bibnamefont
  {Tomamichel}}\ and\ \bibinfo {author} {\bibfnamefont {R.}~\bibnamefont
  {Renner}},\ }\bibfield  {title} {\bibinfo {title} {Uncertainty relation for
  smooth entropies},\ }\href {https://doi.org/10.1103/PhysRevLett.106.110506}
  {\bibfield  {journal} {\bibinfo  {journal} {Phys. Rev. Lett.}\ }\textbf
  {\bibinfo {volume} {106}},\ \bibinfo {pages} {110506} (\bibinfo {year}
  {2011})}\BibitemShut {NoStop}%
\bibitem [{\citenamefont {Branciard}\ \emph {et~al.}(2012)\citenamefont
  {Branciard}, \citenamefont {Cavalcanti}, \citenamefont {Walborn},
  \citenamefont {Scarani},\ and\ \citenamefont {Wiseman}}]{Branciard2012}%
  \BibitemOpen
  \bibfield  {author} {\bibinfo {author} {\bibfnamefont {C.}~\bibnamefont
  {Branciard}}, \bibinfo {author} {\bibfnamefont {E.~G.}\ \bibnamefont
  {Cavalcanti}}, \bibinfo {author} {\bibfnamefont {S.~P.}\ \bibnamefont
  {Walborn}}, \bibinfo {author} {\bibfnamefont {V.}~\bibnamefont {Scarani}},\
  and\ \bibinfo {author} {\bibfnamefont {H.~M.}\ \bibnamefont {Wiseman}},\
  }\bibfield  {title} {\bibinfo {title} {One-sided device-independent quantum
  key distribution: Security, feasibility, and the connection with steering},\
  }\href {https://doi.org/10.1103/PhysRevA.85.010301} {\bibfield  {journal}
  {\bibinfo  {journal} {Phys. Rev. A}\ }\textbf {\bibinfo {volume} {85}},\
  \bibinfo {pages} {010301} (\bibinfo {year} {2012})}\BibitemShut {NoStop}%
\bibitem [{\citenamefont {Tomamichel}\ \emph {et~al.}(2012)\citenamefont
  {Tomamichel}, \citenamefont {Lim}, \citenamefont {Gisin},\ and\ \citenamefont
  {Renner}}]{Tomamichel2012}%
  \BibitemOpen
  \bibfield  {author} {\bibinfo {author} {\bibfnamefont {M.}~\bibnamefont
  {Tomamichel}}, \bibinfo {author} {\bibfnamefont {C.~C.~W.}\ \bibnamefont
  {Lim}}, \bibinfo {author} {\bibfnamefont {N.}~\bibnamefont {Gisin}},\ and\
  \bibinfo {author} {\bibfnamefont {R.}~\bibnamefont {Renner}},\ }\bibfield
  {title} {\bibinfo {title} {Tight finite-key analysis for quantum
  cryptography},\ }\href {https://doi.org/10.1038/ncomms1631} {\bibfield
  {journal} {\bibinfo  {journal} {Nature Commun.}\ }\textbf {\bibinfo {volume}
  {3}},\ \bibinfo {pages} {1} (\bibinfo {year} {2012})}\BibitemShut {NoStop}%
\bibitem [{\citenamefont {Tomamichel}\ and\ \citenamefont
  {Leverrier}(2017)}]{Tomamichel2017}%
  \BibitemOpen
  \bibfield  {author} {\bibinfo {author} {\bibfnamefont {M.}~\bibnamefont
  {Tomamichel}}\ and\ \bibinfo {author} {\bibfnamefont {A.}~\bibnamefont
  {Leverrier}},\ }\bibfield  {title} {\bibinfo {title} {A largely
  self-contained and complete security proof for quantum key distribution},\
  }\href {https://doi.org/10.22331/q-2017-07-14-14} {\bibfield  {journal}
  {\bibinfo  {journal} {{Quantum}}\ }\textbf {\bibinfo {volume} {1}},\ \bibinfo
  {pages} {14} (\bibinfo {year} {2017})}\BibitemShut {NoStop}%
\bibitem [{\citenamefont {Acín}\ \emph {et~al.}(2016)\citenamefont {Acín},
  \citenamefont {Cavalcanti}, \citenamefont {Passaro}, \citenamefont
  {Pironio},\ and\ \citenamefont {Skrzypczyk}}]{Acin2016}%
  \BibitemOpen
  \bibfield  {author} {\bibinfo {author} {\bibfnamefont {A.}~\bibnamefont
  {Acín}}, \bibinfo {author} {\bibfnamefont {D.}~\bibnamefont {Cavalcanti}},
  \bibinfo {author} {\bibfnamefont {E.}~\bibnamefont {Passaro}}, \bibinfo
  {author} {\bibfnamefont {S.}~\bibnamefont {Pironio}},\ and\ \bibinfo {author}
  {\bibfnamefont {P.}~\bibnamefont {Skrzypczyk}},\ }\bibfield  {title}
  {\bibinfo {title} {Necessary detection efficiencies for secure quantum key
  distribution and bound randomness},\ }\href
  {https://doi.org/10.1103/physreva.93.012319} {\bibfield  {journal} {\bibinfo
  {journal} {Phys. Rev. A}\ }\textbf {\bibinfo {volume} {93}},\ \bibinfo
  {pages} {012319} (\bibinfo {year} {2016})}\BibitemShut {NoStop}%
\bibitem [{\citenamefont {Paw\l{}owski}\ and\ \citenamefont
  {Brunner}(2011)}]{Pawlowski2011}%
  \BibitemOpen
  \bibfield  {author} {\bibinfo {author} {\bibfnamefont {M.}~\bibnamefont
  {Paw\l{}owski}}\ and\ \bibinfo {author} {\bibfnamefont {N.}~\bibnamefont
  {Brunner}},\ }\bibfield  {title} {\bibinfo {title} {Semi-device-independent
  security of one-way quantum key distribution},\ }\href
  {https://doi.org/10.1103/PhysRevA.84.010302} {\bibfield  {journal} {\bibinfo
  {journal} {Phys. Rev. A}\ }\textbf {\bibinfo {volume} {84}},\ \bibinfo
  {pages} {010302} (\bibinfo {year} {2011})}\BibitemShut {NoStop}%
\bibitem [{\citenamefont {Woodhead}\ and\ \citenamefont
  {Pironio}(2015)}]{WP2015}%
  \BibitemOpen
  \bibfield  {author} {\bibinfo {author} {\bibfnamefont {E.}~\bibnamefont
  {Woodhead}}\ and\ \bibinfo {author} {\bibfnamefont {S.}~\bibnamefont
  {Pironio}},\ }\bibfield  {title} {\bibinfo {title} {Secrecy in
  prepare-and-measure {Clauser-Horne-Shimony-Holt} tests with a qubit bound},\
  }\href {https://doi.org/10.1103/PhysRevLett.115.150501} {\bibfield  {journal}
  {\bibinfo  {journal} {Phys. Rev. Lett.}\ }\textbf {\bibinfo {volume} {115}},\
  \bibinfo {pages} {150501} (\bibinfo {year} {2015})}\BibitemShut {NoStop}%
\bibitem [{\citenamefont {Goh}\ \emph {et~al.}(2016)\citenamefont {Goh},
  \citenamefont {Bancal},\ and\ \citenamefont {Scarani}}]{Goh2016}%
  \BibitemOpen
  \bibfield  {author} {\bibinfo {author} {\bibfnamefont {K.~T.}\ \bibnamefont
  {Goh}}, \bibinfo {author} {\bibfnamefont {J.-D.}\ \bibnamefont {Bancal}},\
  and\ \bibinfo {author} {\bibfnamefont {V.}~\bibnamefont {Scarani}},\
  }\bibfield  {title} {\bibinfo {title} {Measurement-device-independent
  quantification of entanglement for given {H}ilbert space dimension},\ }\href
  {https://doi.org/10.1088/1367-2630/18/4/045022} {\bibfield  {journal}
  {\bibinfo  {journal} {New J. Phys.}\ }\textbf {\bibinfo {volume} {18}},\
  \bibinfo {pages} {045022} (\bibinfo {year} {2016})}\BibitemShut {NoStop}%
\bibitem [{\citenamefont {Dall'Arno}\ \emph {et~al.}(2015)\citenamefont
  {Dall'Arno}, \citenamefont {Passaro}, \citenamefont {Gallego}, \citenamefont
  {Pawlowski},\ and\ \citenamefont {Acin}}]{dallarno2015}%
  \BibitemOpen
  \bibfield  {author} {\bibinfo {author} {\bibfnamefont {M.}~\bibnamefont
  {Dall'Arno}}, \bibinfo {author} {\bibfnamefont {E.}~\bibnamefont {Passaro}},
  \bibinfo {author} {\bibfnamefont {R.}~\bibnamefont {Gallego}}, \bibinfo
  {author} {\bibfnamefont {M.}~\bibnamefont {Pawlowski}},\ and\ \bibinfo
  {author} {\bibfnamefont {A.}~\bibnamefont {Acin}},\ }\bibfield  {title}
  {\bibinfo {title} {Detection loophole attacks on semi-device-independent
  quantum and classical protocols},\ }\href@noop {} {\bibfield  {journal}
  {\bibinfo  {journal} {Quantum Inf. Comput.}\ }\textbf {\bibinfo {volume}
  {15}},\ \bibinfo {pages} {37} (\bibinfo {year} {2015})},\ \Eprint
  {https://arxiv.org/abs/1210.1272} {arXiv:1210.1272 [quant-ph]} \BibitemShut
  {NoStop}%
\bibitem [{\citenamefont {Ioannou}\ \emph {et~al.}(2021)\citenamefont
  {Ioannou}, \citenamefont {Pereira}, \citenamefont {Rusca}, \citenamefont
  {Gr{\"u}nenfelder}, \citenamefont {Boaron}, \citenamefont {Perrenoud},
  \citenamefont {Abbott}, \citenamefont {Sekatski}, \citenamefont {Bancal},
  \citenamefont {Maring}, \citenamefont {Zbinden},\ and\ \citenamefont
  {Brunner}}]{Ioannou2021}%
  \BibitemOpen
  \bibfield  {author} {\bibinfo {author} {\bibfnamefont {M.}~\bibnamefont
  {Ioannou}}, \bibinfo {author} {\bibfnamefont {M.~A.}\ \bibnamefont
  {Pereira}}, \bibinfo {author} {\bibfnamefont {D.}~\bibnamefont {Rusca}},
  \bibinfo {author} {\bibfnamefont {F.}~\bibnamefont {Gr{\"u}nenfelder}},
  \bibinfo {author} {\bibfnamefont {A.}~\bibnamefont {Boaron}}, \bibinfo
  {author} {\bibfnamefont {M.}~\bibnamefont {Perrenoud}}, \bibinfo {author}
  {\bibfnamefont {A.~A.}\ \bibnamefont {Abbott}}, \bibinfo {author}
  {\bibfnamefont {P.}~\bibnamefont {Sekatski}}, \bibinfo {author}
  {\bibfnamefont {J.-D.}\ \bibnamefont {Bancal}}, \bibinfo {author}
  {\bibfnamefont {N.}~\bibnamefont {Maring}}, \bibinfo {author} {\bibfnamefont
  {H.}~\bibnamefont {Zbinden}},\ and\ \bibinfo {author} {\bibfnamefont
  {N.}~\bibnamefont {Brunner}},\ }\href@noop {} {\bibinfo {title}
  {Receiver-device-independent quantum key distribution}} (\bibinfo {year}
  {2021}),\ \Eprint {https://arxiv.org/abs/2104.14574} {arXiv:2104.14574
  [quant-ph]} \BibitemShut {NoStop}%
\bibitem [{\citenamefont {Wang}\ \emph {et~al.}(2019)\citenamefont {Wang},
  \citenamefont {Primaatmaja}, \citenamefont {Lavie}, \citenamefont
  {Varvitsiotis},\ and\ \citenamefont {Lim}}]{wang_characterising_2019}%
  \BibitemOpen
  \bibfield  {author} {\bibinfo {author} {\bibfnamefont {Y.}~\bibnamefont
  {Wang}}, \bibinfo {author} {\bibfnamefont {I.~W.}\ \bibnamefont
  {Primaatmaja}}, \bibinfo {author} {\bibfnamefont {E.}~\bibnamefont {Lavie}},
  \bibinfo {author} {\bibfnamefont {A.}~\bibnamefont {Varvitsiotis}},\ and\
  \bibinfo {author} {\bibfnamefont {C.~C.~W.}\ \bibnamefont {Lim}},\ }\bibfield
   {title} {\bibinfo {title} {Characterising the correlations of
  prepare-and-measure quantum networks},\ }\href
  {https://doi.org/10.1038/s41534-019-0133-3} {\bibfield  {journal} {\bibinfo
  {journal} {npj Quantum Inf.}\ }\textbf {\bibinfo {volume} {5}},\ \bibinfo
  {pages} {1} (\bibinfo {year} {2019})}\BibitemShut {NoStop}%
\bibitem [{\citenamefont {Devetak}\ and\ \citenamefont
  {Winter}(2005)}]{Devetak2005}%
  \BibitemOpen
  \bibfield  {author} {\bibinfo {author} {\bibfnamefont {I.}~\bibnamefont
  {Devetak}}\ and\ \bibinfo {author} {\bibfnamefont {A.}~\bibnamefont
  {Winter}},\ }\bibfield  {title} {\bibinfo {title} {Distillation of secret key
  and entanglement from quantum states},\ }\href
  {https://doi.org/10.1098/rspa.2004.1372} {\bibfield  {journal} {\bibinfo
  {journal} {Proc. Roy. Soc.}\ }\textbf {\bibinfo {volume} {A461}},\ \bibinfo
  {pages} {207} (\bibinfo {year} {2005})}\BibitemShut {NoStop}%
\bibitem [{\citenamefont {Konig}\ \emph {et~al.}(2009)\citenamefont {Konig},
  \citenamefont {Renner},\ and\ \citenamefont
  {Schaffner}}]{konig2009operational}%
  \BibitemOpen
  \bibfield  {author} {\bibinfo {author} {\bibfnamefont {R.}~\bibnamefont
  {Konig}}, \bibinfo {author} {\bibfnamefont {R.}~\bibnamefont {Renner}},\ and\
  \bibinfo {author} {\bibfnamefont {C.}~\bibnamefont {Schaffner}},\ }\bibfield
  {title} {\bibinfo {title} {The operational meaning of min-and max-entropy},\
  }\href {https://doi.org/10.1109/TIT.2009.2025545} {\bibfield  {journal}
  {\bibinfo  {journal} {IEEE Trans. Inf. Th.}\ }\textbf {\bibinfo {volume}
  {55}},\ \bibinfo {pages} {4337} (\bibinfo {year} {2009})}\BibitemShut
  {NoStop}%
\bibitem [{\citenamefont {Navascu{\'e}s}\ \emph {et~al.}(2008)\citenamefont
  {Navascu{\'e}s}, \citenamefont {Pironio},\ and\ \citenamefont
  {Ac{\'\i}n}}]{Navascues2008}%
  \BibitemOpen
  \bibfield  {author} {\bibinfo {author} {\bibfnamefont {M.}~\bibnamefont
  {Navascu{\'e}s}}, \bibinfo {author} {\bibfnamefont {S.}~\bibnamefont
  {Pironio}},\ and\ \bibinfo {author} {\bibfnamefont {A.}~\bibnamefont
  {Ac{\'\i}n}},\ }\bibfield  {title} {\bibinfo {title} {A convergent hierarchy
  of semidefinite programs characterizing the set of quantum correlations},\
  }\href {https://doi.org/10.1088/1367-2630/10/7/073013} {\bibfield  {journal}
  {\bibinfo  {journal} {New J. Phys.}\ }\textbf {\bibinfo {volume} {10}},\
  \bibinfo {pages} {073013} (\bibinfo {year} {2008})}\BibitemShut {NoStop}%
\bibitem [{\citenamefont {Nielsen}\ and\ \citenamefont {Chuang}(2011)}]{NC11}%
  \BibitemOpen
  \bibfield  {author} {\bibinfo {author} {\bibfnamefont {M.~A.}\ \bibnamefont
  {Nielsen}}\ and\ \bibinfo {author} {\bibfnamefont {I.~L.}\ \bibnamefont
  {Chuang}},\ }\href@noop {} {\emph {\bibinfo {title} {Quantum Computation and
  Quantum Information: 10th Anniversary Edition}}},\ \bibinfo {edition} {10th}\
  ed.\ (\bibinfo  {publisher} {Cambridge University Press},\ \bibinfo {address}
  {USA},\ \bibinfo {year} {2011})\BibitemShut {NoStop}%
\bibitem [{\citenamefont {Bennett}(1992)}]{Bennett1992}%
  \BibitemOpen
  \bibfield  {author} {\bibinfo {author} {\bibfnamefont {C.~H.}\ \bibnamefont
  {Bennett}},\ }\bibfield  {title} {\bibinfo {title} {Quantum cryptography
  using any two nonorthogonal states},\ }\href
  {https://doi.org/https://doi.org/10.1103/PhysRevLett.68.3121} {\bibfield
  {journal} {\bibinfo  {journal} {Phys. Rev. Lett.}\ }\textbf {\bibinfo
  {volume} {68}},\ \bibinfo {pages} {3121} (\bibinfo {year}
  {1992})}\BibitemShut {NoStop}%
\bibitem [{\citenamefont {Bennett}\ and\ \citenamefont
  {Brassard}(2014)}]{Bennett2014}%
  \BibitemOpen
  \bibfield  {author} {\bibinfo {author} {\bibfnamefont {C.~H.}\ \bibnamefont
  {Bennett}}\ and\ \bibinfo {author} {\bibfnamefont {G.}~\bibnamefont
  {Brassard}},\ }\bibfield  {title} {\bibinfo {title} {Quantum cryptography:
  Public key distribution and coin tossing},\ }\href
  {https://doi.org/10.1016/j.tcs.2014.05.025} {\bibfield  {journal} {\bibinfo
  {journal} {Theor. Comput. Science}\ }\textbf {\bibinfo {volume} {560}},\
  \bibinfo {pages} {7} (\bibinfo {year} {2014})}\BibitemShut {NoStop}%
\bibitem [{\citenamefont {Chefles}(1998)}]{C98}%
  \BibitemOpen
  \bibfield  {author} {\bibinfo {author} {\bibfnamefont {A.}~\bibnamefont
  {Chefles}},\ }\bibfield  {title} {\bibinfo {title} {Unambiguous
  discrimination between linearly independent quantum states},\ }\href
  {https://doi.org/https://doi.org/10.1016/S0375-9601(98)00064-4} {\bibfield
  {journal} {\bibinfo  {journal} {Phys. Lett. A}\ }\textbf {\bibinfo {volume}
  {239}},\ \bibinfo {pages} {339} (\bibinfo {year} {1998})}\BibitemShut
  {NoStop}%
\bibitem [{\citenamefont {Horoshko}\ \emph {et~al.}(2019)\citenamefont
  {Horoshko}, \citenamefont {Eskandari},\ and\ \citenamefont {Kilin}}]{HEK19}%
  \BibitemOpen
  \bibfield  {author} {\bibinfo {author} {\bibfnamefont {D.~W.}\ \bibnamefont
  {Horoshko}}, \bibinfo {author} {\bibfnamefont {M.~M.}\ \bibnamefont
  {Eskandari}},\ and\ \bibinfo {author} {\bibfnamefont {S.~Y.}\ \bibnamefont
  {Kilin}},\ }\bibfield  {title} {\bibinfo {title} {Equiprobable unambiguous
  discrimination of quantum states by symmetric orthogonalisation},\ }\href
  {https://doi.org/10.1016/j.physleta.2019.03.006} {\bibfield  {journal}
  {\bibinfo  {journal} {Phys. Lett. A}\ }\textbf {\bibinfo {volume} {383}},\
  \bibinfo {pages} {1728} (\bibinfo {year} {2019})}\BibitemShut {NoStop}%
\bibitem [{\citenamefont {Nielsen}\ and\ \citenamefont
  {Chuang}(2002)}]{nielsen2002quantum}%
  \BibitemOpen
  \bibfield  {author} {\bibinfo {author} {\bibfnamefont {M.~A.}\ \bibnamefont
  {Nielsen}}\ and\ \bibinfo {author} {\bibfnamefont {I.}~\bibnamefont
  {Chuang}},\ }\href@noop {} {\bibinfo {title} {Quantum computation and quantum
  information}} (\bibinfo {year} {2002})\BibitemShut {NoStop}%
\bibitem [{\citenamefont {Fannes}\ \emph {et~al.}(2011)\citenamefont {Fannes},
  \citenamefont {De~Melo}, \citenamefont {Roga},\ and\ \citenamefont
  {Zyczkowski}}]{fannes2011matrices}%
  \BibitemOpen
  \bibfield  {author} {\bibinfo {author} {\bibfnamefont {M.}~\bibnamefont
  {Fannes}}, \bibinfo {author} {\bibfnamefont {F.}~\bibnamefont {De~Melo}},
  \bibinfo {author} {\bibfnamefont {W.}~\bibnamefont {Roga}},\ and\ \bibinfo
  {author} {\bibfnamefont {K.}~\bibnamefont {Zyczkowski}},\ }\bibfield  {title}
  {\bibinfo {title} {Matrices of fidelities for ensembles of quantum states and
  the {H}olevo quantity},\ }\href@noop {} {\bibfield  {journal} {\bibinfo
  {journal} {Quantum Inf. Comput.}\ }\textbf {\bibinfo {volume} {12}},\
  \bibinfo {pages} {472} (\bibinfo {year} {2011})},\ \Eprint
  {https://arxiv.org/abs/1104.2271} {arXiv:1104.2271 [quant-ph]} \BibitemShut
  {NoStop}%
\bibitem [{\citenamefont {Koashi}(2006)}]{Koashi2006}%
  \BibitemOpen
  \bibfield  {author} {\bibinfo {author} {\bibfnamefont {M.}~\bibnamefont
  {Koashi}},\ }\href@noop {} {\bibinfo {title} {Efficient quantum key
  distribution with practical sources and detectors}} (\bibinfo {year}
  {2006}),\ \Eprint {https://arxiv.org/abs/quant-ph/0609180}
  {arXiv:quant-ph/0609180} \BibitemShut {NoStop}%
\bibitem [{\citenamefont {Hayashi}\ and\ \citenamefont
  {Tsurumaru}(2012)}]{Hayashi2012}%
  \BibitemOpen
  \bibfield  {author} {\bibinfo {author} {\bibfnamefont {M.}~\bibnamefont
  {Hayashi}}\ and\ \bibinfo {author} {\bibfnamefont {T.}~\bibnamefont
  {Tsurumaru}},\ }\bibfield  {title} {\bibinfo {title} {Concise and tight
  security analysis of the {B}ennett--{B}rassard 1984 protocol with finite key
  lengths},\ }\href {https://doi.org/10.1088/1367-2630/14/9/093014} {\bibfield
  {journal} {\bibinfo  {journal} {New J. Phys.}\ }\textbf {\bibinfo {volume}
  {14}},\ \bibinfo {pages} {093014} (\bibinfo {year} {2012})}\BibitemShut
  {NoStop}%
\bibitem [{Note1()}]{Note1}%
  \BibitemOpen
  \bibinfo {note} {Note that uncertainty relations for more than two
  measurements have been investigated, but obtaining good bounds for them
  appears to be challenging.}\BibitemShut {Stop}%
\bibitem [{Note2()}]{Note2}%
  \BibitemOpen
  \bibinfo {note} {More generally, practical systems are usually described as
  infinite dimensional systems, for instance when considering optical
  implementations based on coherent states.}\BibitemShut {Stop}%
\bibitem [{\citenamefont {Brown}\ \emph {et~al.}(2021)\citenamefont {Brown},
  \citenamefont {Fawzi},\ and\ \citenamefont {Fawzi}}]{Brown2021}%
  \BibitemOpen
  \bibfield  {author} {\bibinfo {author} {\bibfnamefont {P.}~\bibnamefont
  {Brown}}, \bibinfo {author} {\bibfnamefont {H.}~\bibnamefont {Fawzi}},\ and\
  \bibinfo {author} {\bibfnamefont {O.}~\bibnamefont {Fawzi}},\ }\href@noop {}
  {\bibinfo {title} {Device-independent lower bounds on the conditional von
  {N}eumann entropy}} (\bibinfo {year} {2021}),\ \Eprint
  {https://arxiv.org/abs/2106.13692} {arXiv:2106.13692 [quant-ph]} \BibitemShut
  {NoStop}%
\bibitem [{\citenamefont {Tan}\ \emph {et~al.}(2021)\citenamefont {Tan},
  \citenamefont {Schwonnek}, \citenamefont {Goh}, \citenamefont {Primaatmaja},\
  and\ \citenamefont {Lim}}]{Tan2021}%
  \BibitemOpen
  \bibfield  {author} {\bibinfo {author} {\bibfnamefont {E.~Y.-Z.}\
  \bibnamefont {Tan}}, \bibinfo {author} {\bibfnamefont {R.}~\bibnamefont
  {Schwonnek}}, \bibinfo {author} {\bibfnamefont {K.~T.}\ \bibnamefont {Goh}},
  \bibinfo {author} {\bibfnamefont {I.~W.}\ \bibnamefont {Primaatmaja}},\ and\
  \bibinfo {author} {\bibfnamefont {C.~C.-W.}\ \bibnamefont {Lim}},\ }\href
  {https://doi.org/https://doi.org/10.1038/s41534-021-00494-z} {\bibfield
  {journal} {\bibinfo  {journal} {npj Quantum Inf.}\ }\textbf {\bibinfo
  {volume} {7}},\ \bibinfo {pages} {158} (\bibinfo {year} {2021})}\BibitemShut
  {NoStop}%
\bibitem [{\citenamefont {Dupuis}\ \emph {et~al.}(2020)\citenamefont {Dupuis},
  \citenamefont {Fawzi},\ and\ \citenamefont {Renner}}]{EAT}%
  \BibitemOpen
  \bibfield  {author} {\bibinfo {author} {\bibfnamefont {F.}~\bibnamefont
  {Dupuis}}, \bibinfo {author} {\bibfnamefont {O.}~\bibnamefont {Fawzi}},\ and\
  \bibinfo {author} {\bibfnamefont {R.}~\bibnamefont {Renner}},\ }\bibfield
  {title} {\bibinfo {title} {Entropy accumulation},\ }\href
  {https://doi.org/10.1007/s00220-020-03839-5} {\bibfield  {journal} {\bibinfo
  {journal} {Commun. Math. Phys.}\ }\textbf {\bibinfo {volume} {379}},\
  \bibinfo {pages} {867} (\bibinfo {year} {2020})}\BibitemShut {NoStop}%
\bibitem [{\citenamefont {Tavakoli}(2021)}]{Tavakoli21}%
  \BibitemOpen
  \bibfield  {author} {\bibinfo {author} {\bibfnamefont {A.}~\bibnamefont
  {Tavakoli}},\ }\bibfield  {title} {\bibinfo {title} {Semi-device-independent
  framework based on restricted distrust in prepare-and-measure experiments},\
  }\href {https://doi.org/10.1103/PhysRevLett.126.210503} {\bibfield  {journal}
  {\bibinfo  {journal} {Phys. Rev. Lett.}\ }\textbf {\bibinfo {volume} {126}},\
  \bibinfo {pages} {210503} (\bibinfo {year} {2021})}\BibitemShut {NoStop}%
\end{thebibliography}%

\end{document}